\newtheorem{lemma}{\textbf{Lemma}}
\renewcommand*\env@matrix[1][*\c@MaxMatrixCols c]{%
  \hskip -\arraycolsep
  \let\@ifnextchar\new@ifnextchar
  \array{#1}}
\newcommand{\ketbra}[2]{\vert #1 \rangle  \langle #2 \vert}
\newtheorem{theorem}{\textbf{Theorem}}
\newtheorem{proposition}{\textbf{Proposition}}
\newtheorem{corollary}{\textbf{Corollary}}
\def\BibTeX{{\rm B\kern-.05em{\sc i\kern-.025em b}\kern-.08em
    T\kern-.1667em\lower.7ex\hbox{E}\kern-.125emX}}
\def\BibTeX{{\rm B\kern-.05em{\sc i\kern-.025em b}\kern-.08em
    T\kern-.1667em\lower.7ex\hbox{E}\kern-.125emX}}
\begin{document}

\title{The Quantum Internet: an Efficient Stabilizer states Distribution Scheme
}

\author{
    \IEEEauthorblockN{
        Seid Koudia\IEEEauthorrefmark{3}
        \vspace{6pt}
    }

    \IEEEauthorblockA{
        \IEEEauthorrefmark{3}\textit{Department of Physics ``Ettore Pancini''}\\
        \textit{University of Naples Federico II}\\
        Naples, 80125 Italy\\
        {seid.koudia@unina.it}
        \vspace{6pt}
    }

}

\maketitle

\begin{abstract}
Quantum networks constitute a major part of quantum technologies. They will boost distributed quantum computing drastically by providing a scalable modular architecture of quantum chips, or by establishing an infrastructure for measurement based quantum computing. Moreover, they will provide the backbone of the future quantum internet, allowing for high margins of security. Interestingly, the advantages that the quantum networks would provide for communications, rely on entanglement distribution, which suffers from high latency in protocols based on Bell pair distribution and bipartite entanglement swapping. Moreover, the designed algorithms for multipartite entanglement routing suffer from intractability issues making them unsolvable exactly in polynomial time.  In this paper, we  investigate a new approach for graph states distribution in quantum networks relying inherently on local quantum coding --LQC-- isometries and on multipartite states transfer. Additionally, single-shot bounds for stabilizer states distribution are provided. Analogously to network coding, these bounds are shown to be achievable if appropriate isometries/stabilizer codes in relay nodes are chosen, which induces a lower latency entanglement distribution. As a matter of fact, the advantages of the protocol for different figures of merit of the network are provided. 
\end{abstract}

\begin{IEEEkeywords}
Quantum communication, Quantum switch, Environment-assisted communication.
\end{IEEEkeywords}

\section{Introduction}
Quantum networks \cite{rodney-book,bassoli2021quantum} lie at the heart of many future quantum technologies, and are the basis of the quantum internet \cite{Pirandola2016PhysicsUT,cacciapuoti2022quantum,gyongyosi2022advances}. They rely on genuinely quantum effects, allowing them to overcome classical approaches, or allowing them to achieve unprecedented tasks \cite{seid,koudia2021causal,miguel2021optimized,miguel2021genuine}.  Indeed, quantum networks promise a better security of information exchange among parties through different quantum key distribution protocols, as well as other tasks suck as quantum secret sharing, quantum leader election,  distributed quantum computing and quantum sensing \cite{Wehner2018QuantumIA}. 

Quantum multipartite entanglement is the backbone of most of the applications.  In particular, for many party-protocols, multipartite states are targeted, in order to fulfill the promises of the functionalities of the underlying quantum network \cite{hahn2019quantum}. One of the most important classes that allow for the achievement of the advantages in many applications, as well as providing a fluid connectivity in the network are \textit{graph states}.  These states are famously known in the design of quantum error correcting codes under the name of \textit{stabilizer states} \cite{hein2006entanglement}. This is an important class of entangled states, not only for its use to connect different quantum nodes for the establishment of a modular architecture for distributed quantum computing, but it also allows for new types of secure quantum computing either in the cloud, i.e., as blind quantum computing, or in a distributed way, i.e, as in measurement based quantum computing.

Nevertheless, the main obstacle that faces quantum networks --in the first place-- is the distribution of these resourceful multipartite entangled states \cite{hahn2022limitations, miguel2021optimized}. In the one hand, Different protocols in the literature have been studying the problem of entanglement distribution in quantum networks from different perspectives. Yet, many of the known protocols rely on the intermediate distribution of instances of Bell pair states, which, after different local operations on repeater nodes, establishes links between the clients and a central node  which teleports the target graph state \cite{omar,avis2022analysis}. A typical repeater node in today’s quantum networks is capable of a single effective functionality: forwarding -swapping- bipartite entanglement. But there is no intrinsic reason why we must assume this is the only function ever permitted to nodes and, in application-level overlay networks and multihop  networks, for example, allowing nodes to have a wider variety of functions makes sense. Clearly, this nature of quantum repeaters poses a scalability issue of the proposed protocols.

Moreover, the designed entanglement routing protocols, so far, deal mainly with point-to-point entanglement establishment \cite{van2013path,pant2019routing,gyongyosi2020dynamics,Pirandola2019EndtoendCO}. Indeed, future concrete application of quantum networks, should deal with distributing multipartite entangled states, rather than solely the end-to-end Bell pair distribution.  Many efforts have been put forward in this direction, although they suffer from major problems. In contrast to the end-to-end entanglement routing which might be solved with algorithms that are a variety of Dijkstra's algorithm -which is able to find an optimal route in polynomial time in the number of nodes in the network-, multipartite entanglement distribution protocols suffer from the fact that they rely on NP-hard problems such as the minimum spanning tree or the Steiner tree. As a matter of fact, these can be approximated by algorithms that run in a polynomial time but they are never exact.

In the other hand, harnessing the full capacities of communication networks should be taken into account when studying entanglement distribution. Namely, constrained quantum networks should be considered for entanglement distribution, when the use of different point-to-point channels building the network is limited, or when the time slots allowed for the entanglement distribution required by the clients are fixed. These assumptions are more practical for many reasons. Mainly, this is because the quantum hardware, such as quantum memories, are expensive to be taken unlimited \cite{zhao2009millisecond}. Additionally, it is more practical to consider relatively lower time-slots of entanglement distribution to increase the communication data rate and decrease the latency, or to make different protocols more robust to adversarial attacks on security schemes relying on multipartite entanglement.


\subsection{Previous Work}
Quantum entanglement routing has been the focus of many works in the past few years. A design of an optimal routing scheme has been presented in \cite{marcello}, with the end-to-end entanglement rate was set as the optimality figure of merit. In \cite{Moe}, the authors have presented a remote entanglement distribution scheme for linear repeater chains, where an optimal route maximizing the end-to-end entanglement is determined alongside with the optimal sequence of the entanglement swaps.  Distributed entanglement routing algorithms, with latency taken into account has been the purpose of \cite{dahlberg}. The problem of optimizing end-to-end entanglement in a many source-destination scenario has been studied in details as a multicommodity flow problem in \cite{wehner-multico}. In \cite{Gyongyosi_2019}, a decentralized adaptive routing scheme has been developed, in which the imperfection of quantum memories is taken into account.  A multipath routing approach for multiple end-to-end entanglement has been thoroughly investigated in \cite{Pirandola_multi}.

In \cite{Rodeny-encoding}, the authors handled the case where quantum repeaters are allowed to perform quantum encoding, where it has been shown that the later increases drastically the end-to-end entanglement rates with respect to usual protocols.  This has been recently experimentally investigated in NV centers in \cite{jin-mohsen}. The effect of this type of intermediate encoding on end-to-end key rate generation in QKD has been studied in \cite{ranu2022qkd}. 

For Multipartite entanglement generation and distribution in quantum networks relying on a central node connected by EPR pairs to the remote clients has been the focus of \cite{wehner-central}. In the same spririt, multipartite entanglement distribution relying on finding an optimal spanning tree of the remote client nodes has been investigated in \cite{omar}. Graph states distribution relying on finding  the Steiner tree spanning the remote nodes, and the use of local graphical operations on remote nodes has been elaborated in \cite{markham}. Similar Graph states distribution protocols with lower EPR resource consumption have been given in \cite{towsley2,Towsley1}. Protocols relying on an inherentely multipartite approach for multipartite entanglement distribution have been designed in \cite{dur-multi}. More related to our paper, with a different approach, is \cite{dagmar} where repeater nodes in the quantum network are allowed to perform quantum coding.

\subsection{Outline and Contribution}
In the present paper, we consider quantum intermediate relay nodes where the contents of outgoing states are  causal functions of the contents of received states, referring to this mapping as Local Quantum Coding. First,  we show that Local Quantum Coding --LQC-- in intermediate nodes, can be used to distribute the particular class of graph states in a distributed way among remote client nodes. This feature will be proved to be important for the following reasons:
\begin{itemize}
    \item It is useful to beat the scalability issues of the previous protocols.
    \item Providing a distributed architecture for multipartite entanglement distirbution
\end{itemize}
Second, we show that the scheme of using LQC, is nothing but a particular network coding strategy \cite{yeung,muriel,bassoli2013network}. This allows to achieve the single-shot capacity of the appropriate quantum network topologies, which normal entanglement routing fails to fulfill.  To this aim:
\begin{itemize}
    \item  Different bounds for stabilizer states distribution capabilities of the underlying network would be obtained in terms of min-cuts capacities corresponding to different bipartitions of the remote client nodes requiring the graph state.
\end{itemize}
By considering the quantum network as a tensor network with a particular instance of tensors being specific coding isometries, the advantages of the advantages of our scheme are given in terms of the following figure of merits and applications:
\begin{itemize}
    \item We show that by allowing the intermediate nodes in the quantum network to use LQC, an exponential latency reduction in the entanglement distribution time is achieved.
    \item We show that the use of LQC reduces drastically the memory qubits overhead of the entanglement distribution cost.
    \item We discuss how by allowing intermediate use of quantum error correcting codes, distributed storage can be achieved in quantum networks.
\end{itemize}

The paper is structured as follows. In Section.~\ref{sec2} some preliminaries are be given. In particular, graph states and their entanglement structure will be highlighted. Additionally, the concept of tensor networks along with the notion of min-cut will be presented. In Section.~\ref{sec3}, our model of the distribution quantum network will be elaborated. An equivalence between our model and a network of stabilizer states contracted by Bell pairs is given in Section.~\ref{sec:04}. In Section.~\ref{sec:05} our results comprising the different bounds on the consumed resources to single-shot distribute a target graph state will be established. Section.~\ref{sec:06} illustrates the obtained results and bounds for some specific network topologies. In particular, the protocol will be benchmarked to Bell pair based protocols with the time complexity of the graph states distribution and the number of memory qubits taken as figure of merits. We also study the application of choosing the LQC schemes to be valid stabilizer quantum error correcting schemes in order to achieve distributed quantum storage within quantum networks. We finish the paper with conclusions and future work in Section.~\ref{sec:07}.

\section{Preliminaries} \label{sec2}
In this section, we provide the necessary preliminaries and tools allowing for the understanding of the derived results. 
\subsection{Graph states}
Graph states admit a simple description in terms of mathematical graphs \cite{hein2006entanglement}.  
A graph  is a pair $(G,V)$ of a finite set $V=\{1.\dots,N\}$ and a set $E\subset [V]^2$. The elements of $V$ are called the vertices of the graph and the elements of $E$ are called its edges. Additionaly, we denote by $N_a$ the set of neighboring vertices of the vertex $a$.

The graph state $\ket{G}$ that corresponds to the graph $G$ is the pure state given as 
\begin{equation}
    \ket{G}=\prod_{\{a,b\}\in E}U_{ab} \ket{+}^V
\end{equation}
where 
\begin{equation}
    \ket{+}^V=\otimes_{a\in V} \frac{1}{\sqrt{2}}(\ket{0}+\ket{1})^a 
\end{equation}
Similarly, $U_{ab}$ is the controlled $Z$ gate between the vertices $a$ and $b$.

It is known that any graph state is equivalent to a stabilizer state under a certain class of local operations, the so-called local Clifford operations \cite{hein2006entanglement}, which map the set of stabilizer states onto itself. Accordingly, graph states can be defined uniquely as the common eigenvector with eigenvalues equal to one  in $\mathbb{C}^V$ to the set of independent commuting operators 
\begin{equation}
    K_a=\sigma_x^a\sigma_z^{N_a}
\end{equation}
for all $a\in V$. 
\subsection{Entanglement structure of graph states}
Generally, the many interesting entanglement structure properties of any multipartite state are captured by the reduced state after tracing out parts of the system. Specifically, for a given pure multipartite state $\ket{\Psi}$. For a specific bipartition of the state, namely $\{A,B\}$, the reduced state on $A$
\begin{equation}
    \rho^A_{\Psi}=\mathrm{Tr}_B(\rho)
\end{equation} 
where $\rho=\ket{\Psi}\bra{\Psi}_{AB}$. The entanglement rank of the state $\ket{\Psi}$ in the bipartition $\{A,B\}$, quantifies the amount of Bell pairs needed to create such bipartition, and is given as \cite{hein2006entanglement}:
\begin{equation}
r_A(\ket{\Psi})=\log_2[rank(\rho^A_{\Psi})] \label{Eq:08}
\end{equation}

\subsection{The min-cut}
For a network $(G,V,S,T)$  a cut $(\tilde{S},\tilde{T})$ is a partition $V=\tilde{S}\cup \tilde{T}$ such that $\tilde{S}\subset S $ and $\tilde{T} \subset T$. The min-cut $\mathrm{MC}$ associated with the network is defined to be the minimum over all possible cuts of the product \cite{cui2016quantum,yeung}
\begin{equation}
    \prod_{(u,v)\in E, u\in \tilde{S}, v \in \tilde{T}} d_{(u,v)}
\end{equation}
Namely,
\begin{equation}
    \text{MC}=\min_{(\tilde{S},\tilde{T})}\prod_{(u,v)\in E, u\in \tilde{S}, v \in \tilde{T}} d_{(u,v)}
\end{equation}
It is much more convenient in the remaining of the paper, to work with the min-cut with respect to $\log_2$ of the dimensions, hence the min-cut would be given by 
\begin{equation}
    \text{MC}=\min_{(\tilde{S},\tilde{T})}\sum_{(u,v)\in E, u\in \tilde{S}, v \in \tilde{T}} \log_2(d_{(u,v)})
\end{equation}
Working with this definition of the min-cut would allow us to work later with the entanglement ranks instead of the dimensions of the states distributed. 

\section{The Model} \label{sec3}
\begin{figure}[t]
    \begin{center}
    \label{Fig:1}
        \includegraphics[width=0.9\columnwidth]{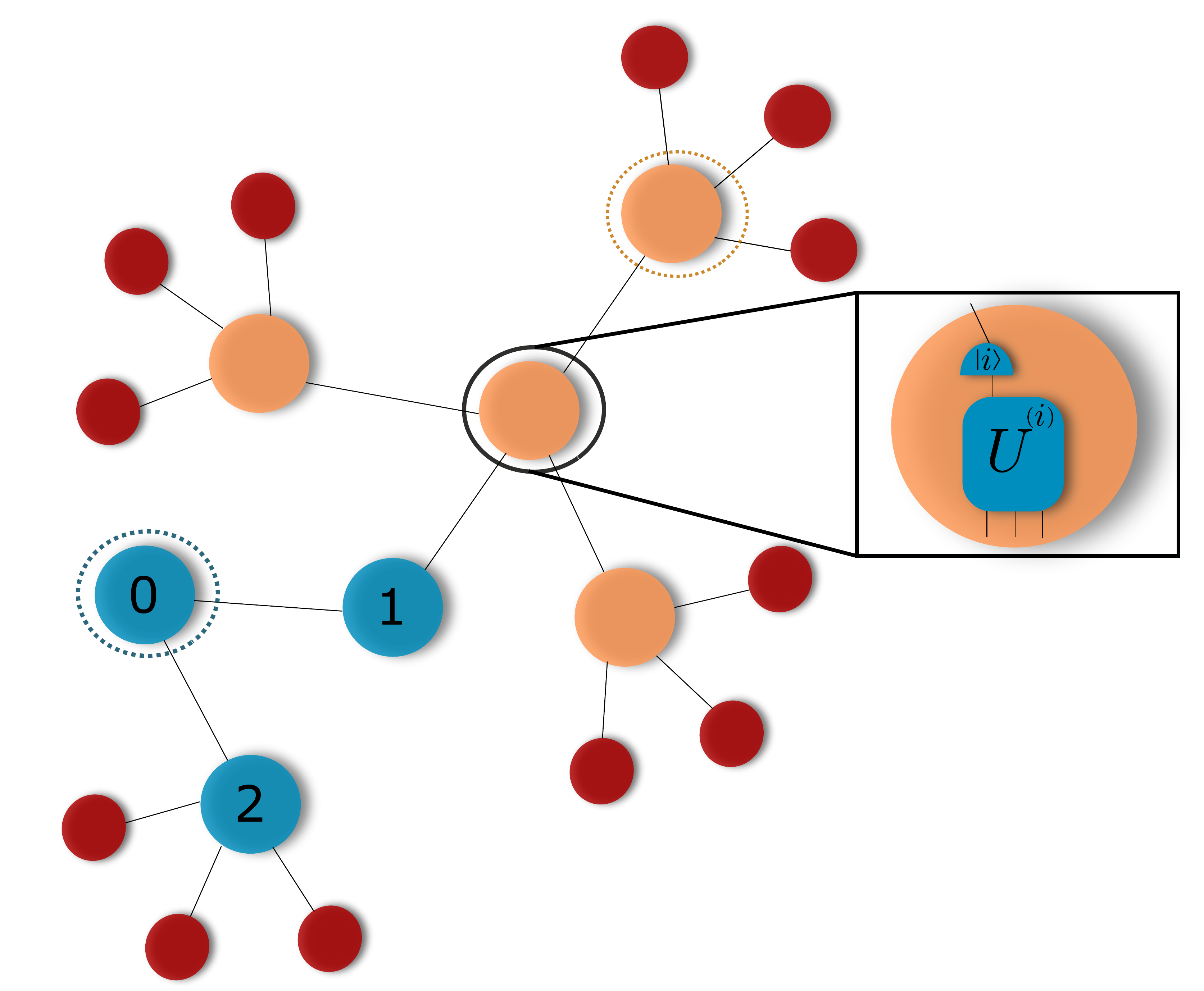}
        \end{center}       
    \caption{Scheme for distributing multipartite entanglement states in a quantum network. The network is composed by twelve clients. each served by a \textit{relay node}. To the right is a magnification of a relay node, which exploits an LQC $S=\sum_{i=1}^{d} U^{(i)}\otimes \ketbra{i}{i}$ to distribute multi-partite entanglement to remote nodes.}
\end{figure}

In our model of quantum networks we are embedding all relay nodes with the possibility of establishing an LQC strategy to generate locally graph states, instead of just doing swapping operations and joint bipartite measurements. Formally, each node $a\in V$ is able to generate a graph state $\ket{G_a}$:
\begin{equation}
\label{Eq:10}
     \ket{G_a}=\sum_{i=1}^{d} U^{(i)}_{a}\ket{0}^{n_a} 
 \end{equation}
following a coherent control strategy according to the isometry:
\begin{equation}
\label{Eq:14}
    S_{a}=\sum_{i=1}^{d} U^{(i)}_{a}\otimes \ketbra{i}{i}_c
\end{equation}
where $d$ is the dimension of the control degree of freedom. Indeed, after measurement of the control degree of freedom in the Fourier basis, the state in Eq.~\ref{Eq:10}, as well as local unitary equivalent to it, are post-selected.

Each party of the obtained graph state $\ket{G_a}$, locally generated at node $a$,  would be distributed to the neighboring relay nodes $\{N_a\}$. Each part serves as a control degree of freedom, in its turn, of an appropriate LQC process $\sum_i  U^i_{b} \otimes \ketbra{i}{i}$, in the corresponding neighboring node $b$. This is illustrated in Fig.~1.  A more specific scheme relying on indefinite causal ordering to generate entanglement has been established and analyzed in details  in \cite{koudia2021causal}. 

In the next section, we show that by considering a network whose nodes $\{a\}_{a\in V}$ are endowed with coding isometries $\{S_a\}_{a\in V}$, indeed, a graph state can be distributed appropriately among the sink nodes, by studying the equivalence to a dual model of the network.

\section{Distributed Graph States Generation}\label{sec:04}

In this section, we show formally, how the endowment of quantum networks with chosen LQC in its relay nodes, allows for the distributed generation of graph states.

\begin{lemma}
\label{lemma1}
Let a quantum network whose relay nodes are endowed with  controlled-isometries. The isometries $\{S_x\}_{x\in v}$ are assumed to be able to generate locally graph states in an appropriate way, upon a control degree of freedom received from the immediate previous nodes $x\in N_y$ according to 
\begin{equation}
\label{Eq:14}
    S_{y}=\sum_{i=1}^{2^r} U^{(i)}_{y}\otimes \ketbra{i}{i}_x
\end{equation}
with $r$ being the number of --incoming-- control qubits. 
This model of the quantum network is equivalent to a network whose vertices $\{x\}$ are endowed with a graph state $\{\ket{G_x}\}$. Each state of these is contracted with the neighbooring state according to the edges $\{e\}$ which represent EPR pairs, as 
\begin{equation}
  \bra{e_{xy}}\ket{G_x}\otimes\ket{\Bar{G}_y}
\end{equation}
where $\ket{\Bar{G}_y}$ is the augmented graph state given by
\begin{equation}
    \ket{\Bar{G}_y}=\ket{0}^{\Bar{a}_y}\ket{0}^{a_y}\ket{G_y/a_y}+\ket{1}^{\Bar{a}_y}\ket{1}^{a_y}\sigma_Z^{N_{a_y}}\ket{G_y/a_y}
\end{equation}
for all $a_y\in V_y$. 
\begin{proof}

In what follows we give a sketch of the proof. 
In general any two graph states $\ket{G_x}$ and $\ket{G_y}$ have the  following schmidt decomposition on single qubit bipartitions \cite{hein2006entanglement}:
\begin{align}
    \ket{G_x}&=\ket{0}^{a_x}\ket{G_x/a_x}+\ket{1}^{a_x}\sigma_Z^{N_{a_x}}\ket{G_x/a_x}\nonumber\\
    \ket{G_y}&=\ket{0}^{a_y}\ket{G_y/a_y}+\ket{1}^{a_y}\sigma_Z^{N_{a_y}}\ket{G_y/a_y}
\end{align}
with $\braket{G_i/a_i|\sigma_Z^{N_{a_i}}|G_i/a_i}=0$ if the underlying graph $G_i$ is connected. 
Similarly, an augmentation graph state $\ket{\bar{G}_y}$ of the state $\ket{G_y}$ on node $a_y$ has the the decomposition 
\begin{equation}
    \ket{\bar{G}_y}=\ket{0}^{\Bar{a}_y}\ket{0}^{a_y}\ket{G_y/a_y}+\ket{1}^{\Bar{a}_y}\ket{1}^{a_y}\sigma_Z^{N_{a_y}}\ket{G_y/a_y}
\end{equation}
On the one hand, by contracting the two states with a Bell pair, the resulting state would be given by 
\begin{equation}
    \ket{G_x/a_x}\ket{0}^{a_y}\ket{G_y/a_y}+\sigma_Z^{N_{a_x}}\ket{G_x/a_x}\ket{1}^{a_y}\sigma_Z^{N_{a_y}}\ket{G_y/a_y}
\end{equation}

On the other hand, we let one qubit from $\ket{G_x}$ to be sent to node $y$ to control a LQC strategy $S_y$ given as 
\begin{equation}
    S_y=U_y^0\otimes \ketbra{0}{0}^x+U_y^1\otimes \ketbra{1}{1}^x\label{eq:18}
\end{equation}
where
\begin{align}
    U_y^0\ket{0}^{\otimes n_y}&=\ket{0}^{a_y}\ket{G_y/a_y}\nonumber\\
    U_y^1\ket{0}^{\otimes n_y}&=\ket{1}^{a_y}\sigma_Z^{N_{a_y}}\ket{G_y/a_y}
\end{align}
generating the graph state $\ket{G_y}$ after measuring the distributed qubit from $x$ in the coherent basis. Indeed, the overall state obtained before measuring the control is given by 
\begin{equation}
    \ket{G_x/a_x}\ket{0}^{a_y}\ket{G_y/a_y}+\sigma_Z^{N_{a_x}}\ket{G_x/a_x}\ket{1}^{a_y}\sigma_Z^{N_{a_y}}\ket{G_y/a_y}
\end{equation}
The procedure extends easily to $r$ incoming qubits, in the same spirit,  leading to the equivalence of the two approaches. 
\end{proof}
\end{lemma}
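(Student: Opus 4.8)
The plan is to verify the claimed equivalence by a direct comparison of the two contracted states, one interface qubit at a time, and then to lift the single-qubit identity to the $r$-qubit case by iteration. First I would fix the single-qubit Schmidt decomposition of a graph state about a chosen vertex $a$, namely $\ket{G}=\ket{0}^{a}\ket{G/a}+\ket{1}^{a}\sigma_Z^{N_{a}}\ket{G/a}$, which is the standard normal form for a stabilizer state across a single-qubit cut and is already recorded in the statement. The connectivity hypothesis $\braket{G/a|\sigma_Z^{N_a}|G/a}=0$ guarantees the two branches are orthogonal, so this is a genuine rank-two Schmidt decomposition, and it is exactly the structure the augmented state $\ket{\bar G_y}$ is built to duplicate on a fresh ancilla $\bar a_y$.

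The heart of the single-qubit argument is to compute both sides explicitly and match them term by term. On the tensor-network side I would expand $\ket{G_x}\otimes\ket{\bar G_y}$ using the decompositions above and apply the (unnormalized) Bell functional $\bra{e_{xy}}=\bra{00}+\bra{11}$ to the qubit $a_x$ of $\ket{G_x}$ and the ancilla $\bar a_y$ of $\ket{\bar G_y}$. Because $\bar a_y$ and $a_y$ are perfectly correlated inside $\ket{\bar G_y}$, the $\bra{00}$ term selects the $\ket{0}^{a_x}$ branch together with $\ket{0}^{a_y}\ket{G_y/a_y}$, while the $\bra{11}$ term selects the $\ket{1}^{a_x}$ branch together with $\ket{1}^{a_y}\sigma_Z^{N_{a_y}}\ket{G_y/a_y}$; summing reproduces precisely the contracted state displayed in the statement. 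On the LQC side I would instead apply the controlled isometry $S_y$ of Eq.~\eqref{eq:18} with target register prepared in $\ket{0}^{\otimes n_y}$ and control qubit $a_x$, using the defining actions $U_y^0\ket{0}^{\otimes n_y}=\ket{0}^{a_y}\ket{G_y/a_y}$ and $U_y^1\ket{0}^{\otimes n_y}=\ket{1}^{a_y}\sigma_Z^{N_{a_y}}\ket{G_y/a_y}$. This yields the same two branches but still carrying the control label $\ket{0/1}^{a_x}$; measuring $a_x$ in the coherent (Fourier) basis and keeping the $\ket{+}$ outcome erases that label and reproduces the tensor-network state exactly, establishing the single-qubit equivalence up to the standard Pauli byproduct attached to the complementary outcome.

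Finally I would promote the single-qubit statement to the full $r$-qubit interface. The clean route is iteration: each incoming control qubit defines one EPR edge and one augmentation ancilla, so the $2^r$-branch isometry $S_y=\sum_{i=1}^{2^r}U_y^{(i)}\otimes\ketbra{i}{i}_x$ factors through $r$ successive single-qubit controlled insertions, and the augmented state carries $r$ correlated ancilla--vertex pairs that are consumed by the $r$ Bell functionals one at a time. The main obstacle is bookkeeping the multi-qubit Schmidt structure consistently, since across an $r$-qubit cut the Schmidt rank of a graph state is governed by the binary rank of the off-diagonal block of the adjacency matrix rather than by a simple product of single-qubit cuts; the branch unitaries $U_y^{(i)}$ and the accompanying $\sigma_Z$ byproducts must therefore be indexed so that the correlations produced by coherent control coincide with those produced by contraction. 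I expect the genuinely delicate point to be verifying that the accumulated $\sigma_Z^{N_{a}}$ corrections agree on overlapping neighborhoods when several control qubits share common neighbors, which is exactly what forces the specific form of the augmented states $\ket{\bar G_y}$ and pins down the equivalence in full generality.
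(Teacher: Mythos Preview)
Your proposal is correct and follows essentially the same route as the paper's own proof: write the single-qubit Schmidt decomposition of each graph state, compute the Bell-pair contraction of $\ket{G_x}\otimes\ket{\bar G_y}$ explicitly, compute the action of the controlled isometry $S_y$ on $\ket{G_x}\otimes\ket{0}^{\otimes n_y}$ explicitly, observe the two outputs coincide, and then extend to $r$ control qubits by iteration. Your discussion of the $r$-qubit bookkeeping (overlapping $\sigma_Z$ neighborhoods, binary rank of the off-diagonal adjacency block) is in fact more careful than the paper, which simply asserts that the procedure ``extends easily to $r$ incoming qubits, in the same spirit.''
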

To get a feel of this lemma lets suppose that the relay nodes $x, y$ choose appropriate LQC strategies enabling the generation of $GHZ$ states $\ket{GHZ}_x$ and $\ket{GHZ}_y$ respectively. By sending a single qubit  from $\ket{GHZ}_x$  in order to control the LQC in $y$, the overall state is given up to a local unitary by 
\begin{equation}
    \frac{1}{\sqrt{2}}(\ket{00}_x\ket{000}_y+\ket{11}_x\ket{111}_y)
\end{equation}
it is easy to notice that this state is equivalent to the state $\bra{EPR}_{xy}\ket{GHZ}_x\ket{GHZ}_y$. 
\begin{theorem}
\label{Theorem:04}
The state distributed between the clients in the quantum network where the relay nodes use appropriate LQC strategies of single qubit unitaries is given by
\begin{equation}
\ket{\Psi}=\otimes_x^{|V|}\bra{G_x}\otimes_e^{|E|}\ket{e} \label{Eq:theorem 4} 
\end{equation}
This overall state  is a graph state.
\begin{proof}
The state in \ref{Eq:theorem 4} results directly from a direct application of Lemma.~\ref{lemma1} to many contractions between the relay nodes leading to the overall state $\ket{\Psi}$ between the clients. Therefore, we only give the proof for the statement about the fact that the over state is a graph state.  

First we should note that any graph state is Local Clifford equivalent to a stabilizer state. Hence, in what comes we will limit ourselves to stabilizer states.  
We should point out that any stabilizer state is characterized by the set of generators of its stabilizer group. These are elements of the local Pauli group that mutually commute, forming an abelian group. Let $\ket{\Phi}_{AB}$ be a stabilizer  state whose stabilizer group is $S$. Similarity, Let $\ket{\chi}_A$ be a stabilizer state with corresponding stabilizer group $T$. The contraction of the two states $\ket{\Psi}_B=\braket{\chi|\Phi}_B$ can be written in density operator notation as
\begin{align}
    \ketbra{\Psi}{\Psi}_B&=\bra{\chi}_A\ketbra{\Phi}{\Phi}_{AB}\ket{\chi}_A 
\end{align}
By using properties of the trace operator and of pure density matrices, we can write the density operator  $\ketbra{\Psi}{\Psi}_B$ in a trace form as:
\begin{equation}
   \ketbra{\Psi}{\Psi}_B=\mathrm{Tr}(\ketbra{\chi}{\chi}_A\ketbra{\Phi}{\Phi}_{AB}) 
\end{equation}
To simlify the trace formula, we can use the fact that the density matrices of the stabilizer states $\ket{\chi}_A$ and $\ket{\Phi}_{AB}$ might be written respectively as: 
\begin{align}
    \ketbra{\chi}{\chi}&=\frac{1}{|S|}\sum_{s_{AB}\in S} {s_{AB}}\nonumber\\
    \ketbra{\Phi}{\Phi}&=\frac{1}{|T|}\sum_{t_{A}\in T} {t_{A}}
\end{align}
Therefore, we have 
\begin{equation}
    \ketbra{\Psi}{\Psi}_B=\sum_{s_{AB}}\sum_{t_{A}} \frac{1}{|S||T|}\mathrm{Tr}(s_{AB}t_A)
\end{equation}
The trace in the last equality is non-vanishing if and only if 
\begin{equation}
    s_{AB}=t_A\otimes u_B
\end{equation}
This is due to the fact that stabilizer group elements are a product of traceless Pauli matrices, which make them traceless in their turn.  
This means that the sum is carried only over the elements $s_{AB}$ of the form $t_A\otimes u_B$. It is worthnoting that both $t_A$ and  $u_B$ are both Local Pauli operators with support on $A$ and $B$ respectively.  By this, we conclude that:  
\begin{equation}
\label{reducedgraph}
   \ketbra{\Psi}{\Psi}_B = \frac{|K|\cdot2^{|A|} }{|S||T|}\sum_{u_{B}}u_{B}
\end{equation}
The factor $2^{|A|}$ is due to the fact that $\mathrm{Tr}(\mathbb{I}^A)=2^{|A|}$, where $\mathbb{I}^A$ is the identity on the subspace $A$. 
Furthermore, because all the state $\ket{G_x}$ or $\ket{e}$ are graph states, they should have full rank stabilizer groups, therefore the full cardinally of the stabilizers $S$ and $T$ should be respectively given by:
\begin{align}
    |S|&=2^{|A|+|B|}\nonumber\\
    |T|&=2^{|A|}
\end{align}
By plugging this into Eq.~\ref{reducedgraph} we obtain:
\begin{align}
    \ketbra{\Psi}{\Psi}_B &= \frac{|K|\cdot2^{|A|} }{2^{2|A|+|B|}}\sum_{u_{B}}u_{B}\nonumber\\
    &=\frac{|K| }{2^{|A|+|B|}}\sum_{u_{B}}u_{B}
\end{align}
Now we can notice that in order for $\ketbra{\Psi}{\Psi}_B$ to be a graph state, the elements $\{u_B\}$ should form a group and satisfy 
\begin{align}
    &|\{u_B\}|=2^{B}\nonumber\\
    &[u_b,u'_b]=0 \quad \forall u_B,u'_B \in \{u_B\}
\end{align}
The first condition is to have a full rank stabilizer group, and the second guarantees that the group is indeed abelian.  
The first condition requires that $|K|$ should satisfy:
\begin{equation}
    |K|=2^{|A|}
\end{equation}

In order to complete the proof, we should whow that $\{u_B\}$ form an abelian group of local Pauli operators, making the state $\ket{\Psi}_B$ a stabilizer state. \\
As a matter of fact, we note that since $\{t_A\otimes u_B\}$ are elements of an abelian group, they should pairwise commute.  In order to prove that the operators $\{t_A\otimes u_B\}$ form a group we note that for any two elements $t^1_A\otimes u^1_B$ and $t^2_A\otimes u^2_B$, their product is of the form   $t^1_At^2_A\otimes u^1_Bu^2_B$, hence of the form $t_A\otimes u_B$. This proves the closure of the set. It is easy to note that any operator in this set is its own inverse, hence the set of operators $\{t_A\otimes u_B\}$ is an abelian group.
In turn, since $\{t_A\}$ form an abelian group, similarly, $\{t_A\otimes u_B\}$ form abelian subgroup of the stabilizer $S$,  $\{u_B\}$ should form an abelian group as well. Consequently, the abelian group $\{u_B\}$ should  describe a stabilizer state. By considering $\ket{\Phi}_{AB}$ in \ref{Eq:theorem 4} to be the stabilizer state given by $\otimes_e^{|E|}\ket{e}$ and $\ket{\chi}_A$ to be the stabilizer state given by $\otimes_x^{|V|}\ket{x}$, and by applying the previous discussion, we conclude that the state $\ket{\Psi}$ shared between the clients is indeed a graph state. 
\end{proof}

\end{theorem}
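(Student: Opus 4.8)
The plan is to separate the statement into two layers: the derivation of the contraction formula \ref{Eq:theorem 4}, and the proof that the resulting state is a graph state. The first layer is pure bookkeeping on top of Lemma \ref{lemma1}: each relay node contributes a local graph state $\ket{G_x}$, each edge is contracted against a Bell pair $\ket{e}$, and iterating the single-edge equivalence of Lemma \ref{lemma1} over every edge of the network assembles precisely $\otimes_x \bra{G_x}\otimes_e \ket{e}$. So I would spend essentially no effort there and concentrate on the structural claim. For that, I would first invoke the fact (from \cite{hein2006entanglement}) that graph states and stabilizer states coincide up to local Clifford unitaries, so that it suffices to show that contracting stabilizer states through Bell pairs produces a stabilizer state; the local Cliffords can be reinstated at the end without affecting the conclusion.

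The core reduces to a single pairwise contraction, from which the network case follows by induction on the edges. I would fix a bipartite stabilizer state $\ket{\Phi}_{AB}$ with abelian stabilizer group $S$ and a stabilizer state $\ket{\chi}_A$ with group $T$, and write the contracted object as $\ketbra{\Psi}{\Psi}_B=\mathrm{Tr}_A(\ketbra{\chi}{\chi}_A\ketbra{\Phi}{\Phi}_{AB})$. Since $\braket{\chi|\Phi}_B$ is automatically a (possibly subnormalized) pure vector on $B$, purity is free and only the stabilizer structure needs verifying. Expanding both projectors in their stabilizer sums, $\ketbra{\Phi}{\Phi}=\frac{1}{|S|}\sum_{s_{AB}\in S}s_{AB}$ and $\ketbra{\chi}{\chi}=\frac{1}{|T|}\sum_{t_A\in T}t_A$, and using that Pauli strings are traceless unless equal to the identity, the partial trace over $A$ annihilates every cross term except those for which the $A$-restriction of $s_{AB}$ coincides with some $t_A\in T$. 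Writing such survivors as $s_{AB}=t_A\otimes u_B$, each contributes a factor $\mathrm{Tr}_A(t_A^2)=2^{|A|}$ and deposits the operator $u_B$ on $B$.

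It then remains to show that the surviving set $\{u_B\}$ is a full-rank abelian stabilizer group. Closure is inherited from $S$ together with the fact that the $A$-parts range over the subgroup $T$, so a product $t_At'_A\otimes u_Bu'_B$ again has its $A$-part in $T$; each $u_B$ is its own inverse because Pauli strings square to the identity; and mutual commutativity descends from the commutativity of the restrictions of elements of the abelian group $S$. To conclude that $\{u_B\}$ genuinely describes a stabilizer state on $B$, I would run the counting argument of the excerpt: the normalization forces the multiplicity $|K|=2^{|A|}$, which yields exactly $|\{u_B\}|=2^{|B|}$ distinct generators, i.e. a full stabilizer group. Specializing $\ket{\Phi}_{AB}=\otimes_e\ket{e}$ (the Bell pairs, a product of stabilizer states) and $\ket{\chi}_A=\otimes_x\ket{G_x}$ (the local graph states), and reinstating the local Cliffords, then delivers the claim that $\ket{\Psi}$ is a graph state.

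The step I expect to be the main obstacle is precisely the cardinality and multiplicity bookkeeping: establishing that the subgroup of $S$ whose $A$-restriction lands in $T$ has exactly the index for which the projection $s_{AB}\mapsto u_B$ surjects onto a stabilizer group of size $2^{|B|}$, with the correct multiplicity $|K|=2^{|A|}$, rather than producing a degenerate lower-rank group or an over-counted one. Closure, self-inverseness, and commutativity are routine; the delicate point is ensuring that the contraction neither collapses the $B$-system stabilizer below full rank nor leaves anticommuting survivors, either of which would signal a non-stabilizer or unnormalizable outcome. This is where I would lean on the hypotheses that the contractions run along genuine network edges and that the local $\ket{G_x}$ are connected graph states, so that the relevant Pauli restrictions are nondegenerate and the index count comes out as required.
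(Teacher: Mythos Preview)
Your proposal is correct and follows essentially the same route as the paper: invoke Lemma~\ref{lemma1} for the contraction formula, reduce to stabilizer states via local Clifford equivalence, expand both projectors as stabilizer sums, keep only the terms with $s_{AB}=t_A\otimes u_B$, and then argue closure, self-inverseness, and commutativity of the surviving $\{u_B\}$ together with the cardinality count $|K|=2^{|A|}$. The only cosmetic difference is that you frame the passage to the full network as an induction on edges whereas the paper performs the contraction in one shot by taking $\ket{\Phi}_{AB}=\otimes_e\ket{e}$ and $\ket{\chi}_A=\otimes_x\ket{G_x}$; your explicit flagging of the multiplicity bookkeeping as the delicate step is, if anything, more honest than the paper, which asserts $|K|=2^{|A|}$ without further justification.
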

It is worth-mentioning that the proof of \textbf{Theorem.~\ref{Theorem:04}}, is device-independent, as it is solely relying on the group theoretical structure of stabilizer states. Thus, the proof holds independently of the technological platform implementing the LQC generating entanglement.

\section{Single-Shot network Capacities}\label{sec:05}
In this section, we establish entanglement distribution bounds in quantum networks using network coding methods. we quantify the different resources needed for the Establishment of target graph states with known entanglement ranks.

\subsection{Entanglement Distribution Bounds}
\begin{theorem}
\label{Theorem:02}
A graph state with a list of bipartite entanglement ranks $\{r_{AB}\}$ can be distributed in a quantum network by a single use of the latter if and only if:
\begin{equation}
    MC_{\{A,B\}}\geq r_{AB}=\text{rank}_{\mathbb{F}_2}(\Gamma_{AB})=\log_2[\text{rank}(\rho^G_A)] \label{Eq:15}
\end{equation}
for every bipartition $\{A,B\}$ of the target graph state. 
\begin{proof}
We are gonna prove this theorem by contradiction. 
A bipartition of the state of the client nodes into $\{A,B\}$ would induce a bipartition of the network into a cut $(\tilde{S},\tilde{T})$ with the client nodes $A,B$ satisfying $A\subset \tilde{S}$ and $B\subset \tilde{T}$ respectively. We can regard $A$ and $B$ as source and sink nodes respectively and  we can collect them in a single node as an effective network without affecting the communication capacities between the bipartitions. It is known that the entanglement that can be established  between the source and sink nodes in a tensor network should satisfy \cite{cui2016quantum}
\begin{equation}
    Q(\{A,B\})\leq MC_{\{A,B\}}
\end{equation}
For bi-partite pure entangled states the entanglement that can be extracted $E_D(A,B)$ (entanglement distillation) and the entanglement needed to create the state $E_F$ (the entanglement of formation) are equal and they satisfy 
\begin{align}
     E_D(A,B)&=E_F(A,B)\nonumber\\
     &=S(A)\nonumber\\
     &=rank(\mathrm{Tr}_B(\rho))
\end{align}
Accordingly, for graph states and by using Eq.~\ref{Eq:08}, we have that 
\begin{align}
    Q(\{A,B\})&=\text{rank}_{\mathbb{F}_2}(\Gamma_{AB})\nonumber\\
    &\leq MC_{\{A,B\}}
\end{align}

\end{proof}
\end{theorem}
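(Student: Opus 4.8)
The plan is to prove the two implications of the equivalence separately, treating necessity (distribution $\Rightarrow$ min-cut bound) and sufficiency (min-cut bound $\Rightarrow$ distribution) as distinct problems, since they demand very different techniques. For necessity I would work in the tensor-network picture supplied by Theorem~\ref{Theorem:04}: once the relay nodes run their LQC isometries, the state shared by the clients is exactly the contraction $\otimes_x \bra{G_x}\,\otimes_e \ket{e}$, a bona fide graph state living on a tensor network whose edge weights are the $\log_2 d_{(u,v)}$ that enter the min-cut. Fixing a client bipartition $\{A,B\}$, I would coarse-grain every node on the $A$ side into a single source and every node on the $B$ side into a single sink, an operation that cannot increase the entanglement transmissible across the cut. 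The quantum max-flow/min-cut inequality for tensor networks \cite{cui2016quantum} then yields $Q(\{A,B\}) \le MC_{\{A,B\}}$, and because for a pure bipartite state the distillable entanglement, the entanglement of formation and the entropy $S(A)$ all coincide, the entanglement actually carried across the cut equals the graph-state rank $r_{AB}=\text{rank}_{\mathbb{F}_2}(\Gamma_{AB})=\log_2[\text{rank}(\rho^G_A)]$. Hence distributing the state forces $MC_{\{A,B\}}\ge r_{AB}$ for every $\{A,B\}$.

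For sufficiency the argument must become constructive: assuming $MC_{\{A,B\}}\ge r_{AB}$ for all bipartitions, I would exhibit a choice of coding isometries $\{S_x\}$ that realises the target state in a single network use. Here I would exploit the $\mathbb{F}_2$-linear structure of stabilizer states, which lets me recast the problem as linear network coding over $\mathbb{F}_2$ (or a suitable extension field): each relay node's isometry is represented by a linear map on the incoming control registers, and the contraction rule of Lemma~\ref{lemma1} guarantees that a consistent linear assignment propagates the stabilizer generators through the network to the clients. The single-source/single-sink instance is then the quantum Menger theorem, for which saturating the min-cut is standard; the statement at hand is the multicast version, in which one asks a \emph{single} coding assignment to meet every cut simultaneously.

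The main obstacle is exactly this simultaneity in the sufficiency direction: a fixed family of isometries must respect the bound for all bipartitions at once, which is the analogue of the Ahlswede--Cai--Li--Yeung multicast theorem \cite{yeung} rather than a single max-flow computation. I expect to resolve it by a generic (random or algebraic) choice of linear coding coefficients, arguing that the assignments failing some cut form a proper subvariety, so that a working assignment exists once the alphabet is large enough; over $\mathbb{F}_2$ this may force grouping qubits into qudit edges of dimension $d_{(u,v)}$, which the min-cut definition already accommodates. A secondary point I would verify is that the contracted state is not merely of the correct Schmidt rank across each cut but is genuinely the prescribed graph state, which follows by tracking the stabilizer group through the contraction exactly as in the proof of Theorem~\ref{Theorem:04}.
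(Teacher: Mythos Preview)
Your treatment of the necessity direction is essentially identical to the paper's own proof: fix a bipartition $\{A,B\}$, coarse-grain each side into a single effective source/sink, invoke the tensor-network min-cut bound of \cite{cui2016quantum} to get $Q(\{A,B\})\le MC_{\{A,B\}}$, and then use that for pure bipartite states $E_D=E_F=S(A)$ so that the entanglement across the cut equals the graph-state rank $r_{AB}=\text{rank}_{\mathbb{F}_2}(\Gamma_{AB})$. This is exactly what the paper does, modulo phrasing.

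Where you diverge from the paper is in the sufficiency direction---and here the relevant observation is that \emph{the paper's proof does not address sufficiency at all}. Despite the ``if and only if'' in the statement, the proof given establishes only the forward implication (distribution $\Rightarrow$ min-cut bound); there is no construction of isometries achieving an arbitrary admissible rank profile, and the announced ``contradiction'' never materialises. The subsequent ``Achievability'' subsection discusses only the repetition-code isometry and specific subnetwork examples, explicitly noting that this choice ``is not a suitable candidate for achieving the network capacity bound in general.'' So your instinct that sufficiency requires a separate, constructive argument is correct, and your proposal already goes further than the paper on this point.

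That said, your sufficiency sketch is a plan rather than a proof, and the obstacle you flag is real: the classical multicast theorem of \cite{yeung} does not transfer verbatim to the quantum setting, since quantum max-flow can be strictly below quantum min-cut for generic tensor networks (as shown in \cite{cui2016quantum} itself). Your appeal to the $\mathbb{F}_2$-linear structure of stabilizer states is the right lever---restricting to Clifford/isometric encoders rather than arbitrary tensors is what could restore an ACLY-type statement---but turning this into a rigorous argument would require showing that a generic linear assignment over $\mathbb{F}_2$ (or an extension) simultaneously meets every cut constraint, and then checking that the resulting contracted stabilizer group is exactly that of the target graph state rather than merely matching its Schmidt ranks. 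Neither step is carried out in the paper, so if you complete them you will have strictly strengthened the published result.
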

Although it is obvious, the fact that by distributing EPR pairs between a central node and clients on a star topology, any graph state can be distributed among the clients, can be derived from the previous theorem in  network information theoretical settings.
\begin{corollary}
 The star network topology can distribute any graph state between the leafs of the network from the central node by a single use of the network.
 \begin{proof}
On the one hand, we note that the entanglement rank $r_{AB}$ for any bipartition $\{A,B\}$ of a graph state is bounded from above as:
\begin{equation}
    r_{AB}\leq \min \{|A|,|B|\}
\end{equation}
On the other hand, the min-cut of any bipartition of the clients in the star network topology satisfies:
\begin{equation}
    MC_{\{A,B\}}=\min \{|A|,|B|\}
\end{equation}
Therefore, we have that 
\begin{equation}
    r_{AB}\leq MC_{\{A,B\}}
\end{equation}
and hence, by Theorem.~\ref{Theorem:02} the star topology of the network allows for the distribution of any graph state.
 \end{proof}
 \end{corollary}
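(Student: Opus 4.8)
The plan is to obtain the corollary as a direct consequence of Theorem~\ref{Theorem:02}: it suffices to verify that the min-cut condition $MC_{\{A,B\}}\ge r_{AB}$ holds for \emph{every} bipartition $\{A,B\}$ of the leaves, and to do so without any reference to which graph state is being distributed. First I would bound the entanglement rank of an arbitrary graph state. For any bipartition $\{A,B\}$ the reduced state $\rho^G_A$ acts on the Hilbert space of $A$, whose dimension is $2^{|A|}$, while by the Schmidt decomposition $\mathrm{rank}(\rho^G_A)=\mathrm{rank}(\rho^G_B)$, so $\mathrm{rank}(\rho^G_A)\le 2^{\min\{|A|,|B|\}}$. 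Taking $\log_2$ as in Eq.~\ref{Eq:08} yields $r_{AB}\le \min\{|A|,|B|\}$. This inequality is purely structural and holds for every graph state, which is precisely what the ``any graph state'' claim demands.

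Next I would compute the min-cut of the star topology. Each leaf is joined to the central relay node by a single edge carrying one Bell pair, so $\log_2 d_{(u,v)}=1$ on every edge. Fixing a bipartition of the leaves into $A$ and $B$, I note that the central node is a relay and hence a cut $(\tilde S,\tilde T)$ with $A\subset\tilde S$ and $B\subset\tilde T$ is free to place it on either side. Assigning it to $\tilde S$ leaves exactly the $|B|$ center-to-$B$ edges crossing the cut, while assigning it to $\tilde T$ leaves the $|A|$ center-to-$A$ edges; since these are the only admissible choices and each crossing edge contributes $1$, the minimum over cuts is $MC_{\{A,B\}}=\min\{|A|,|B|\}$.

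Combining the two steps gives $r_{AB}\le\min\{|A|,|B|\}=MC_{\{A,B\}}$ for every bipartition, so the hypothesis of Theorem~\ref{Theorem:02} is satisfied and a single use of the star network suffices. The part I expect to require the most care is the min-cut evaluation rather than the rank bound: one must establish both achievability (exhibiting the cut that realizes $\min\{|A|,|B|\}$) and optimality (no admissible cut does better), and in particular track the freedom in placing the central relay node, since a careless placement would overcount the crossing edges and inflate the cut. Once the min-cut is pinned down exactly, the corollary is immediate; the only remaining subtlety is to confirm the inequality is uniform across \emph{all} bipartitions, including the unbalanced ones, so that the ``if and only if'' criterion of Theorem~\ref{Theorem:02} is met in full.
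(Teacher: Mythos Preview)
Your proposal is correct and follows essentially the same route as the paper: bound $r_{AB}\le\min\{|A|,|B|\}$, compute $MC_{\{A,B\}}=\min\{|A|,|B|\}$ for the star, and invoke Theorem~\ref{Theorem:02}. Your additional justification of the rank bound via the Schmidt decomposition and of the min-cut via the two possible placements of the central node simply spells out what the paper asserts without detail.
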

Similarly, we can use the results of Theorem.~\ref{Theorem:02} to state the following corollary for GHZ states distribution in arbitrary quantum networks. 

\begin{corollary}
\label{corollary:02}
Any topology of a connected quantum network distributes , by a single shot, a GHZ state between the clients
 \begin{proof}
The proof is straightforward from Theorem.~\ref{Theorem:02} by noting that the entanglement ranks $r_{AB}$ on any bipartition $\{A,B\}$ of the GHZ state satisfy 
\begin{equation}
    r_{AB}= 1
\end{equation}
On the other hand, the min-cut of any bipartition of the clients in any connected network satisfies:
\begin{equation}
    MC_{\{A,B\}}\geq  1
\end{equation}
Therefore, we have that 
\begin{equation}
    r_{AB}\leq MC_{\{A,B\}}
\end{equation}
and hence, by Theorem.~\ref{Theorem:02}, any connected topology of the quantum network can distribute a GHZ state between the clients by single use of the network. 
 \end{proof}
 \end{corollary}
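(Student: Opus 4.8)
The plan is to invoke Theorem~\ref{Theorem:02} directly: a graph state is single-shot distributable precisely when $MC_{\{A,B\}}\geq r_{AB}$ holds for every bipartition $\{A,B\}$ of the clients. For the $GHZ$ state it therefore suffices to establish two complementary estimates that together force this inequality in every case, namely an upper bound on the entanglement ranks and a lower bound on the min-cuts, and then to combine them.

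First I would compute the entanglement ranks of the $GHZ$ state. Writing $\ket{GHZ}=\tfrac{1}{\sqrt{2}}(\ket{0}^{\otimes n}+\ket{1}^{\otimes n})$ over the clients and fixing any bipartition $\{A,B\}$ with $A,B\neq\emptyset$, the reduced state is $\rho_A=\mathrm{Tr}_B(\ketbra{GHZ}{GHZ})=\tfrac{1}{2}\big(\ket{0}^{\otimes|A|}\bra{0}^{\otimes|A|}+\ket{1}^{\otimes|A|}\bra{1}^{\otimes|A|}\big)$, which has rank $2$. By Eq.~\ref{Eq:08} this yields $r_{AB}=\log_2[\mathrm{rank}(\rho_A)]=1$ on every bipartition, independently of how the clients are split. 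Equivalently, one can read this off the $\mathbb{F}_2$-rank of the off-diagonal adjacency block $\Gamma_{AB}$ of the star graph that is local-Clifford equivalent to $GHZ$.

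Next I would lower-bound the min-cut. The key structural observation is that any client bipartition $\{A,B\}$ induces a cut $(\tilde{S},\tilde{T})$ of the whole network with $\tilde{S},\tilde{T}\neq\emptyset$; since the network graph is connected, at least one edge $(u,v)\in E$ must cross this cut. Each physical channel carries at least one qubit, so $d_{(u,v)}\geq 2$ and hence contributes $\log_2(d_{(u,v)})\geq 1$ to the min-cut sum. Taking the minimum over all cuts consistent with the bipartition therefore gives $MC_{\{A,B\}}\geq 1$.

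Finally I would combine the two estimates: for every bipartition $r_{AB}=1\leq MC_{\{A,B\}}$, so the hypothesis of Theorem~\ref{Theorem:02} is satisfied and the $GHZ$ state is distributable by a single use of the network. The only genuinely non-routine ingredient is the connectivity-to-min-cut step, i.e.\ the elementary graph-theoretic fact that every proper bipartition of a connected graph is crossed by some edge; once this is in hand I expect no real obstacle, as the corollary is essentially the specialization of the main theorem to the extremal case $r_{AB}\equiv 1$.
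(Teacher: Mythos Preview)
Your proposal is correct and follows essentially the same route as the paper: compute $r_{AB}=1$ for every bipartition of the $GHZ$ state, observe $MC_{\{A,B\}}\geq 1$ by connectivity, and apply Theorem~\ref{Theorem:02}. The only difference is that you supply the explicit reduced-state and crossing-edge arguments that the paper leaves implicit.
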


\subsection{Achieveability of the ultimate bounds for entanglement distribution}
\begin{figure}[t]
    \begin{minipage}[t] {.45\textwidth}
        \centering
        \includegraphics[width=\textwidth]{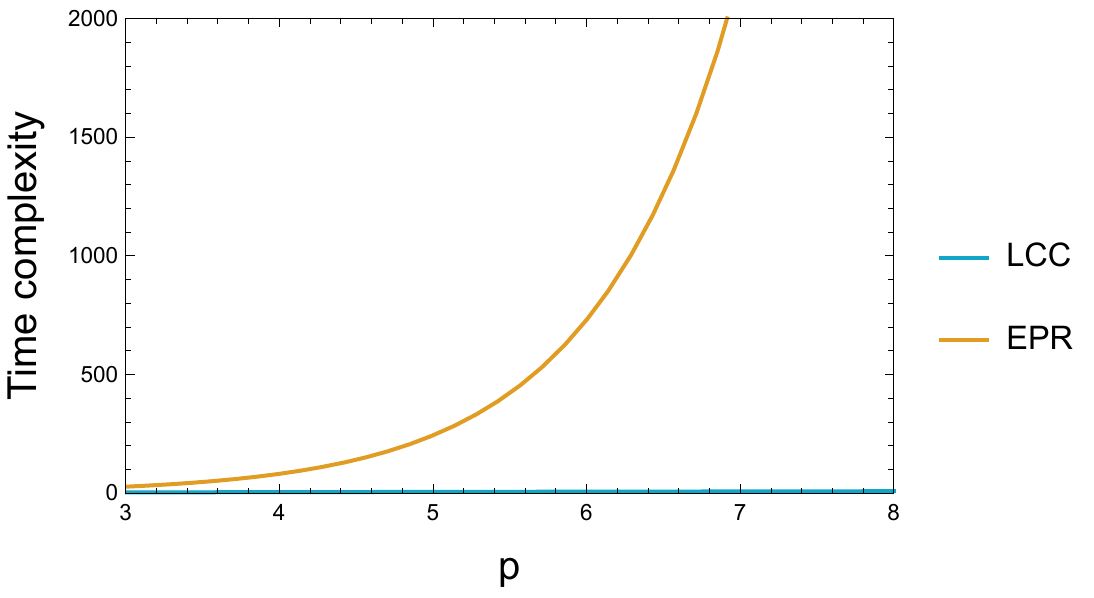}        
        \label{Fig:3-a}
    \end{minipage}
    \hfill
    \begin{minipage}[t] {.45\textwidth}
        \centering
        \includegraphics[width=\textwidth]{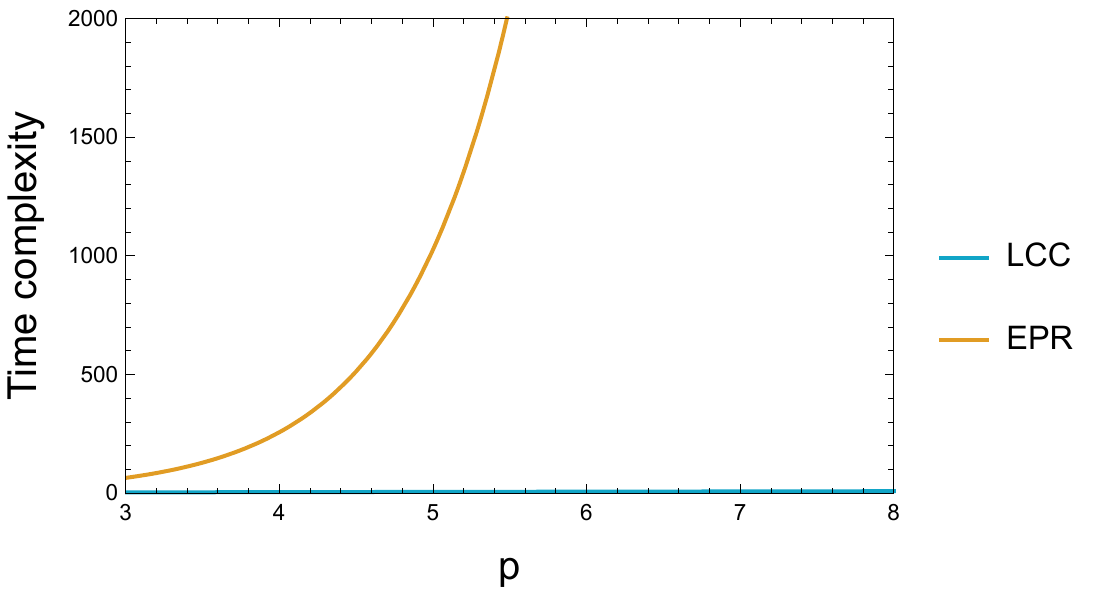}
        \label{Fig:3-b}
        \end{minipage}
        \label{Fig:3}
    \caption{The time complexity of the distribution of a target stabilizer state in different depths of the networks. Left: The time complexity of a network of connectivity $n=3$. Right: The time complexity of a network of connectivity $n=4$.}
\end{figure}

In order to better harness the capabilities of a given quantum network for entanglement distribution, the design of appropriate LQC strategies in relay nodes is needed. This is equivalent to finding the optimal coding strategies in relay nodes in network coding in order to achieve the capacity of a given network. As a matter of fact, our network architecture relying on LQC for entanglement distribution, might be regarded as a quantum network coding  counterpart of the classical network coding paradigm.

As a matter of fact, suppose we endow all the relay nodes in the quantum network with a Linear local quantum coding strategy  (LLQC) given by: 
\begin{align}
    S_x&=U^0_x\otimes \ketbra{0}{0}+U^1_x\otimes \ketbra{1}{1}\nonumber\\
    &=I^{\otimes n}\otimes \ketbra{0}{0} + X^{\otimes n} \otimes \ketbra{1}{1} \label{repetition}
\end{align}
This strategy is equivalent to a quantum repetition code, generating GHZ states in each node. It is easy to notice that this strategy is identical to the isometry in Eq.\ref{eq:18} up to a local unitary given by $I\otimes H^{\otimes n-1}$. Indeed this strategy can only generate entanglement ranks no higher than $r_{AB}=1$, making it not a suitable candidate for achieving the network capacity bound in general.  To illustrate this, we consider different sub-network structures appearing in the network of Fig.~\ref{Fig:09}. For the chain subnetwork (blue nodes), the isometry in Eq.~\ref{repetition} performed in the blue dashed node
\begin{equation}
    S_0=I_0\otimes \ketbra{0}{0}_1 + X_0 \otimes \ketbra{1}{1}_1 \label{epr}
\end{equation}
leads to the same performance of an entanglement swapping. Formally, the effect of the LQC \ref{epr} in the dashed node is equivalent, by Lemma.~\ref{lemma1}, to 
\begin{equation}
    \bra{EPR}_{0}\ket{EPR}_{01}\ket{EPR}_{02}=\ket{EPR}_{12}
\end{equation}
which is exactly entanglement swapping performed at node $0$.
Differently, the performance of the same isometry, in the dashed nodes of sub-networks (orange nodes) and (end nodes), is not equivalent to entanglement swapping, and leads to better entanglement distribution rates as will be discussed in the next section.  This suggests that our multiparty protocol provides an advantage over the Bell pair based approach when the entanglement distribution network has high connectivity.

\section{Advantages}\label{sec:06}
In this section we illustrate the advantages of the use of intermediate quantum coding in quantum networks accordign to different figures of merit.
\subsection{Exponential Latency and Memory Overhead Reduction}
\begin{figure}
    \begin{minipage}[c] {0.49\textwidth}
        \centering
\includegraphics[width=1\columnwidth]{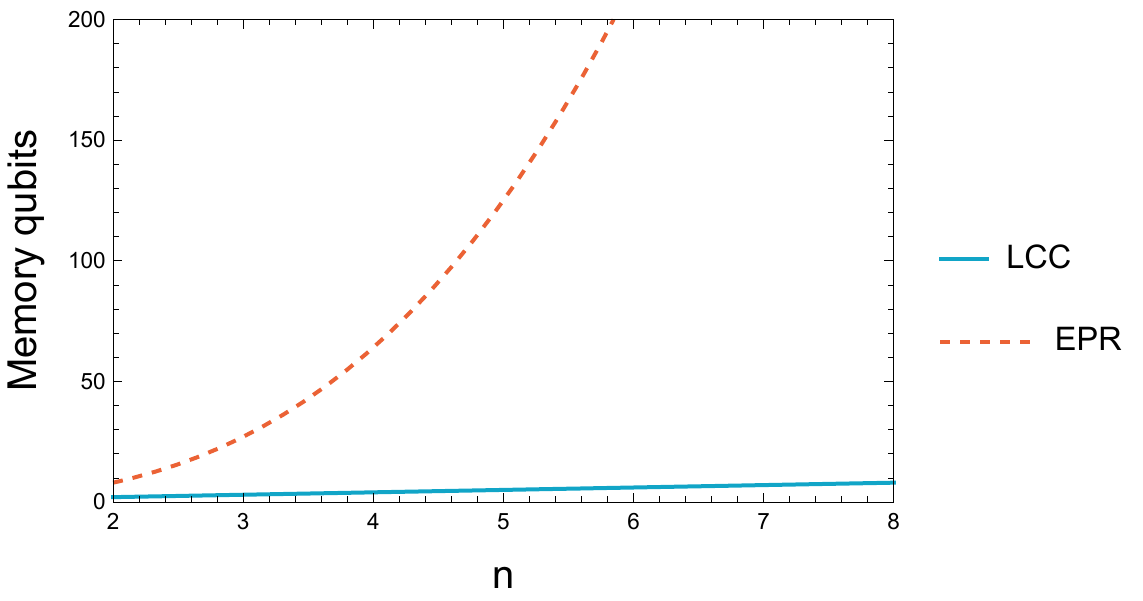}
        \label{Fig:4-a}
    \end{minipage}
    \hspace{0.02\textwidth}
    \begin{minipage}[c] {0.49\textwidth}
        \centering
        \includegraphics[width=1\columnwidth]{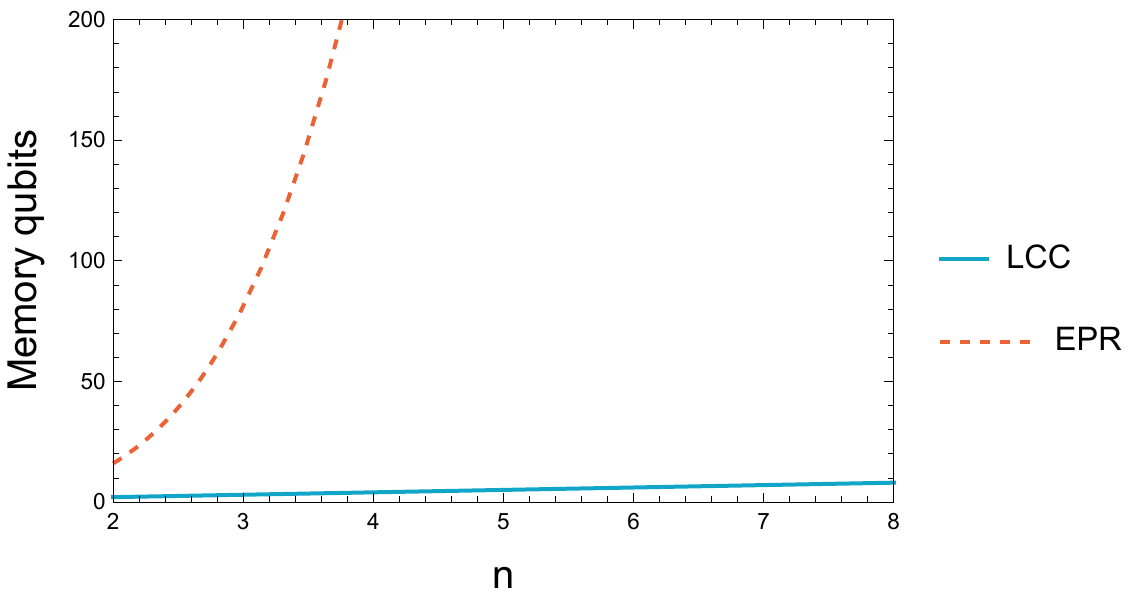}

        \label{Fig:4-b}
        \end{minipage}
    \caption{The number of memory qubits required for the distribution of a target graph states within networks of differents depths. Left: The number of memory qubits required in a network of depth three. Right: The number of memory qubits required in a network of depth four.}
    \label{Fig:4}
\end{figure}
To illustrate these advantages we consider regular networks with different degrees of connectivity. For instance, we will consider a tree network where each node has the same number of children  nodes. We will consider also that the end relay nodes are connected to the same number of clients. 
In such a network the number of relay nodes is given by $N\sim O(n^{p-1}) $, where $p$ is the depth of the network --the smallest distance from the central node and the clients--. Accordingly, the complexities of the single shot protocol with the isometry \ref{repetition} in each relay node,  and the Bell pair based, are given respectively by 
\begin{align}
    &\sim O(p)\nonumber\\
    & \sim O(n^{p})
\end{align}
Clearly, we have an exponential time complexity reduction in the depth of the network for the distribution of GHZ like states.  The results are illustrated in the plots of Fig.~\ref{Fig:3}. We can easily see that the more the deeper the quantum network is the more clear the advantage of using a multipartite distribution strategy is. 

The similar investigation can be conducted for the advantage over the reduction of memory qubits in the network for GHZ. The figure of merit that we use in this case is the maximal number of memory qubits/per node used in the network to achieve the task of distributing a target graph state between the clients. For the LQC approach, the maximum number of memory qubits needed in each relay node is equal to the number of controlled qubits $n^a_c$ required to generate locally the entangled state with an appropriate LQC strategy. For the case of the previous regular network, this is maximized by the number of connectivity $n+1$ of the node according to Lemma.~\ref{lemma1}. Thus, the maximum number of memory qubits $n_m$ satisfies 
\begin{equation}
  \max_{a\in V/C}(n_m)= n^{a^*}_c \leq n+1
\end{equation}
Differently, in the Bell pair-based approach, the highest possible number of memory qubits $n_m$ in each node is of the order of the number of clients in the network 
\begin{equation}
   \max_{a\in V/C}(n_m)\sim n^p 
\end{equation}
The results are illustrated in  Fig.~\ref{Fig:4}. We can notice how the entanglement distribution based on multipartite protocols is beneficial for networks with growing depth and connectivity.

\subsection{Noise resilience}
As a noise model, we assign to every entangled state distributed in the quantum network in Fig.~\ref{Fig:09} a probability of failure  $p$. Namely, the noise model for the entanglement distribution is given by 
\begin{equation}
    \mathcal{N}(\rho)=(1-p)\rho+pe
\end{equation}
where $e$ is the error flag which will abort the distribution protocol. 
Accordingly, the error propagation in the network for the distribution of $n$-EPR pairs would be given by 
\begin{equation}
    \mathcal{N}^{\otimes n} (\rho^{\otimes n})=\sum_{k=0}^n {n \choose k} (1-p)^{n-k}p^k\rho^{\otimes n-k}\otimes e^{\otimes k} 
\end{equation}

Clearely, the probability that the protocol does not abort is when $k=0$ which is given by 
\begin{equation}
    p_{sucess}=(1-p)^n
\end{equation}
As we can easily notice, the less qubit channels are used to establish the target state between the clients the less noisy is the protocol. 

For the case of the tree network of depth $p=2$, our protocol outperforms the protocol relying on central node and EPR pairs distribution. The performance of the two protocols is illustrated in Fig.~\ref{Fig:10-b}. It is clear that for different target graph states we have
\begin{equation}
    p_{sucess}^{LQC}>p_{sucess}^{EPR}
\end{equation}
showing that our protocol is more robust to noise than other protocols. 

\begin{figure}
\begin{center}
 \includegraphics[width=0.8\columnwidth]{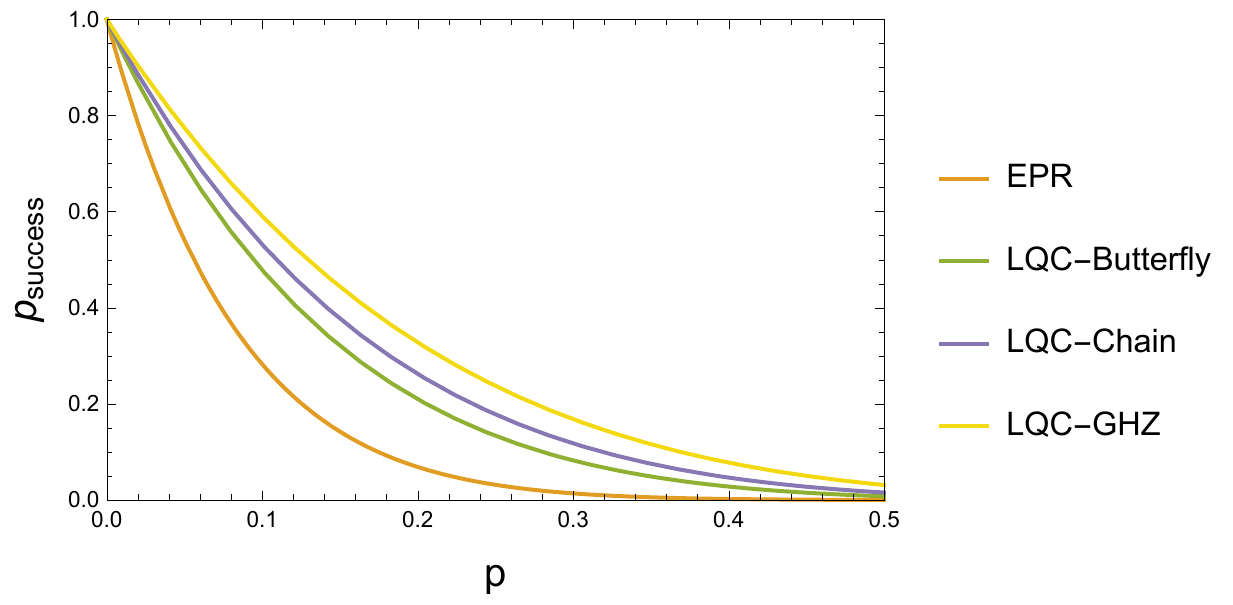}
\end{center}
        
        \caption{A plot comparing the probability of success of the distribution of different graph states in the framework of the paper with the EPR based approach.}
        \label{Fig:10-b}  
\end{figure}

\subsection{Distributed Quantum Storage}

A distributed storage system is an infrastructure that can split data across multiple physical servers, and often across more than one data center. It typically takes the form of a cluster of storage units. The storage system includes features that serves as a mechanism for data synchronization and coordination between cluster storage nodes that are geographically dispersed. Moreover, it has the intelligence to detect and respond to failures and cyber attacks. The objective is to achieve very low latency by storing data physically near the location it will be used. 

Fundamentally, the transition from classical distributed storage to quantum distributed storage of quantum data, cannot be smooth due to the no-cloning theorem. Although, our distributed stabilizer states generation discussed in Sec.~\ref{sec:04}, using intermediate quantum encoding, can be harnessed to enhance quantum storage of quantum data significantly. 

Consider the network in Fig.~\ref{fig:05}, where the bulk nodes (orange) denoted by $\{b_i\}$ want to store their single qubit data $\{q_{b_i}\}$ into the neighbooring cluster nodes (red) denoted as $\{c_ij\}$. The best that each node can do to store its qubit, independentely of the other bulk nodes, is to encode it in a three qubit code given by a GHZ-like state 
\begin{equation}
    \ket{GHZ}^{b_i}=\frac{1}{\sqrt{2}}\big(\ket{000}^{c_{i1}c_{i2}c_{i3}}+\ket{111}^{c_{i1}c_{i2}c_{i3}}\big)
\end{equation}
It can easily be noticed that by encoding in such codes, single qubit errors that might occur during the storage cannot be corrected by no means, making the encoding vulnerable to noise and adversarial attacks in the transmission. 
Differently, if the bulk nodes (orange) collaborates and look for local codes that harness the full capacity of the network as discussed in Sec.~\ref{sec:05}, they might achieve better performance. to do so, let the bulk nodes choose to encode their single qubits in a five qubit code each. The five qubit code, denoted as $[[5,1,3]]$, can be described by a cyclic five qubits graph state, which have maximal entanglement ranks among five qubit graph states. Equivalently, it might be described by the stabilizer generators given by\cite{lidar2013quantum,gottesman1997stabilizer}: 
\begin{align}
    XZZXI, IXZZX, XIXZZ, ZXIXZ
\end{align}
This  can encode at most a single logical qubit, and can correct single qubit errors and up to two erasure errors.
\begin{corollary}
\label{corollary:03}
 By contracting the three $[[5,1,3]]$ codes with Bell pairs corresponding to edges according to Eq.~\ref{Eq:theorem 4}, a nine qubit code $[[9,3,3]]$ is established between remote cluster nodes. This nine qubit code has as generators
\begin{align}
    &XZZXIXZZX, XIXXZZXZZ, IXZZXIIXZ\nonumber\\
    &ZXIIXZZXI, ZYYXZZIII, YXXYXXIII
\end{align}  
\end{corollary}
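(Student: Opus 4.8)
The plan is to read this corollary as a concrete instantiation of the contraction rule of Theorem~\ref{Theorem:04}, followed by a direct certification of the code parameters. First I would fix the geometry: the three constituent $[[5,1,3]]$ codes live on $3\times 5 = 15$ physical qubits, and the three edges of Fig.~\ref{fig:05} are three Bell pairs, each gluing one \emph{interface} qubit of one code to one interface qubit of a neighbour. Since each code sits on two edges, it donates two interface qubits and retains three \emph{storage} qubits, so the contraction removes $2\times 3 = 6$ qubits and leaves $15-6 = 9$. Because the three node states and the three edge states are stabilizer states, Theorem~\ref{Theorem:04} applies verbatim and guarantees that $\ket{\Psi}=\otimes_x\bra{G_x}\otimes_e\ket{e}$ of Eq.~\ref{Eq:theorem 4} is again a stabilizer state; the only extension needed is to carry the single logical degree of freedom of each code through the group-theoretic argument, which is immediate since the contraction acts only on the interface (physical) qubits and never on the logical operators. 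This already fixes $n=9$ and $k=3$, hence $n-k=6$ independent stabilizer generators, matching the six strings listed in the statement.

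Second, I would derive those six generators explicitly. Contracting a Bell pair $\ket{e}=\ket{\Phi^+}_{aa'}$ acts on Pauli operators by the transfer rules $\bra{\Phi^+}X_a=\bra{\Phi^+}X_{a'}$ and $\bra{\Phi^+}Z_a=\bra{\Phi^+}Z_{a'}$, so that a product of the constituent codes' generators survives the projection precisely when it commutes with every $X_aX_{a'}$ and $Z_aZ_{a'}$, and its Pauli weight on any contracted qubit is transported onto the partner qubit. Forming the products of the $XZZXI$-type generators of the three cyclic $[[5,1,3]]$ codes that meet this condition, and collapsing $X\cdot Z\mapsto Y$ on the qubits where transported operators collide, should reproduce the listed operators (this is exactly the origin of the $Y$'s in $ZYYXZZIII$ and $YXXYXXIII$). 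I would then check, over $\mathbb{F}_2$ in the symplectic representation, that the six strings pairwise commute and are linearly independent, and track the logical $\bar X,\bar Z$ of each constituent code through the same transfer to confirm that the three encoded qubits remain independent and nontrivial, so that $k=3$ genuinely holds.

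The hard part will be certifying the distance $d=3$. The naive argument of restricting a merged logical operator to a single constituent code does not immediately give $d\ge 3$, because a weight-$3$ logical of a $[[5,1,3]]$ code may place part of its support on the two contracted interface qubits, leaving weight as low as $1$ on the nine surviving storage qubits. Ruling this out requires using the specific cyclic structure of the perfect code together with the chosen interface qubits: I would show that the minimum weight over the normalizer modulo the stabilizer of the six listed generators is exactly three, either by the structural observation that pulling a storage-supported logical back through the Bell pairs yields a genuine logical on at least one constituent code whose interface support is forced to be accompanied by storage support of weight $\ge 3$, or, failing a clean argument, by a finite enumeration of the low-weight cosets (the search space is small since $n-k=6$). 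I would also confirm as a consistency check that every bipartition of the nine storage qubits carries enough entanglement rank to meet the min-cut bound of Theorem~\ref{Theorem:02}, which certifies that no single contraction has silently destroyed a logical degree of freedom. The delicate point throughout is that Bell-pair gluing can in principle lower the distance, so the weight of the proof rests on showing that this particular contraction pattern does not.
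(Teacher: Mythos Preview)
Your derivation of the six generators via the Bell-pair transfer rules and the independence check in symplectic form match the paper's Appendix almost exactly; the paper writes out the $6\times 18$ binary matrix $H$ and argues row-independence over $\mathbb{F}_2$ just as you propose. Where you diverge is the distance argument. You work \emph{forward}, worrying that a constituent logical supported on interface qubits may lose weight under contraction, and this pushes you toward a case analysis of the cyclic structure or a brute enumeration of low-weight cosets. The paper instead works \emph{backward}: it lifts an arbitrary weight-$\le 2$ Pauli $O_B$ on the nine boundary qubits to $O_B\otimes I_A$ on all fifteen, observes that this is still weight $\le 2$ and hence detected by the product of the three $[[5,1,3]]$ stabilizers (a $[[15,3,3]]$ code on disjoint blocks), and then argues that the detecting element $s_{BA}$ descends to a residual stabilizer $g_\ell=\mathrm{Tr}_A(s_{BA}t_A)$ anticommuting with $O_B$. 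The virtue of lifting errors rather than pushing logicals is that weight never increases, so the constituent distance $3$ is inherited directly and the same reasoning scales to the general setting of Proposition~\ref{proposition:01}; your enumeration certifies only this one instance. Your caution that Bell-pair gluing can in principle lower the distance is precisely the question of whether the detecting $s_{BA}$ can always be chosen so that its interface part lies in the Bell-pair stabilizer and hence actually descends; that is where the real work hides in either approach, and the paper handles it by invoking the trace form from Theorem~\ref{Theorem:04} rather than by enumeration.
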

It can be simply checked that these nine qubit Pauli operators are commuting. Moreover, it is shown in Appendix.~\ref{app.1} that they are independent of each other. Therefore, they provide a stabilizer code storing three logical qubits -encoded locally in the three bulk nodes-, in nine physical qubits-shared between the remote cluster nodes-.  Furthermore, it is shown in details in the  Appendix that the established nine qubit code can correct for single qubit errors  and up to two erasure errors. As a matter of fact, the remotely established code between remote cluster nodes, and storing the three qubits from the bulk nodes is denoted by $[[9,3,3]]$. Not surprisingly, the obtained code satisfies, indeed, the quantum singleton bound given by\cite{lidar2013quantum,gottesman1997stabilizer}:

\begin{equation}
    k\leq n-2(d-1)
\end{equation}

We can generalize the previous example of quantum storage to regular networks with a fixed number of neighbors. As a matter of fact we have the following proposition:
\begin{figure}
\begin{center}
\includegraphics[width=0.9\columnwidth]{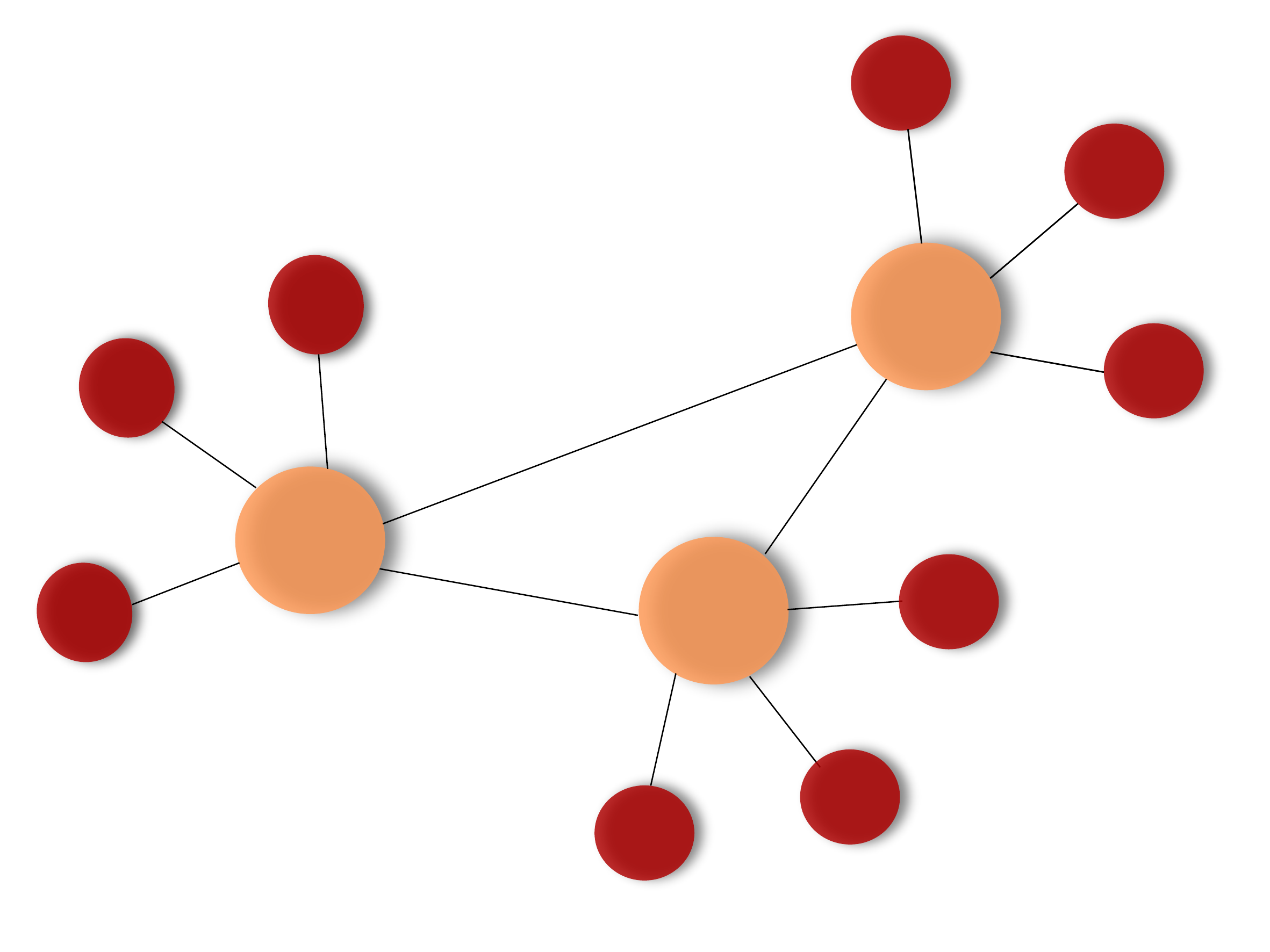}
\end{center}
    \caption{A network where the bulk nodes (orange) store a single qubit quantum data and wants to store it in its neighbooring cluster  nodes (red)}
        \label{fig:05}
\end{figure}
\begin{proposition}
\label{proposition:01}
Let a quantum storage network where the boundary $B$, containing the quantum registers, and a bulk containing $m$ nodes each of which encodes $k$ logical qubits in a stabilizer code $[[l,k,d]]$ of $l$ physical qubits and distance $d$. Then, by contracting the different codes, a stabilizer code $[[|B|,mk,D]]$, encoding the $mk$ qubits, can be established between the quantum registers $|B|$, where its distance $D$ satisfies:
\begin{equation}
    D\leq \frac{|B|+ml-2m(d-1)-2km}{2}+1
\end{equation}
 \end{proposition}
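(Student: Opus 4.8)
The plan is to prove the bound in two stages: first pin down the parameters $[[|B|,mk,D]]$ of the contracted code, and then bound its distance by the quantum Singleton bound, relaxing the result into the advertised form by means of the \emph{local} Singleton bounds. First I would invoke Theorem~\ref{Theorem:04} (used exactly as in Corollary~\ref{corollary:03}) to guarantee that contracting the $m$ local stabilizer codes $[[l,k,d]]$ with Bell pairs along the internal edges yields a genuine stabilizer code supported on the boundary registers $B$. The physical-qubit count follows from contraction bookkeeping: each internal edge is a Bell pair that consumes one physical qubit from each of the two codes it joins, so if $c$ edges are contracted the surviving support has size $|B|=ml-2c$ (for the three $[[5,1,3]]$ blocks of Corollary~\ref{corollary:03} this reads $9=15-6$, with $c=3$). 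The logical dimension is the delicate point: each bulk node injects $k$ logical qubits, the Bell pairs carry none, and the contraction redistributes but does not destroy logical information, so the global code encodes exactly $mk$ logical qubits.

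With the parameters fixed as $[[|B|,mk,D]]$, I would then apply the quantum Singleton bound $k\le n-2(d-1)$ to the global code, which reads $mk\le|B|-2(D-1)$ and rearranges to $D\le\tfrac{|B|-mk}{2}+1$. This already bounds the distance; to recast it in the claimed shape I would add the slack supplied by the local codes. Each local $[[l,k,d]]$ code obeys its own Singleton bound $k\le l-2(d-1)$, so summing over the $m$ nodes gives $mk\le ml-2m(d-1)$, equivalently $-mk\le ml-2m(d-1)-2mk$. Substituting this into the numerator of $\tfrac{|B|-mk}{2}+1$ produces exactly
\begin{equation}
D\le\frac{|B|+ml-2m(d-1)-2km}{2}+1 ,
\end{equation}
which is the stated inequality. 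Note that this relaxation coincides with the sharper bound $\tfrac{|B|-mk}{2}+1$ precisely when every local code saturates its Singleton bound, i.e.\ is MDS, as the $[[5,1,3]]$ blocks are; this explains why the arithmetic is tight in the worked example.

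The main obstacle I anticipate is the first stage, not the arithmetic of the second. The subtle claims are that the contraction keeps the logical dimension pinned at $mk$ and that the surviving operator set still defines a code with a well-defined distance $D$; distance in particular is neither additive nor otherwise simply inherited from the local codes under tensor-network contraction. I would therefore argue at the level of the stabilizer and logical groups exactly as in the proof of Theorem~\ref{Theorem:04} — showing that the contracted generators remain independent and mutually commuting, that the traceless-overlap condition $s_{AB}=t_A\otimes u_B$ forces the merged stabilizer to live on $B$, and that no Bell-pair contraction can reduce the encoded count below $mk$ — and only then read off the Singleton constraint. Once the code structure and its parameters are secured, the remaining steps are routine.
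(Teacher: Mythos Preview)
Your proposal is correct and follows essentially the same route as the paper: invoke Theorem~\ref{Theorem:04} to establish that the contraction yields a stabilizer code $[[|B|,mk,D]]$, apply the global Singleton bound $mk\le|B|-2(D-1)$, and then relax it via the summed local Singleton bounds $mk\le ml-2m(d-1)$ to obtain the stated inequality. Your observation that the relaxation is tight precisely when the local codes are MDS is a nice addition not made explicit in the paper, and your caution about the logical-dimension count is well placed, since the paper handles that point only through the projector normalization $\tfrac{1}{2^{|B|-km}}\sum_{u_B}u_B$ rather than by a direct argument.
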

 \begin{proof}
 Suppose we have a network with bulk nodes having $k$ qubits encoded in a stabilizer quantum error correcting code $[[l,k,d]]$, where $l$ is the number of neighboring nodes. The code-space of such code is given by 
\begin{equation}
    \ketbra{\psi}{\psi}_i=\frac{1}{2^{l-k}}\sum_{b_i} u_{b_i} 
\end{equation}
where $\{u_{b_i}\}$ are the elements of the corresponding stabilizer group of node $i$. Suppose also that we have $m$ nodes in the bulk of the network and we have the storing registers sitting at the boundary $B$ of the network. According to Theorem.~\ref{Theorem:04} the state on the boundary would be given by the projector
\begin{align}
    \ketbra{\Psi}{\Psi}_B&=\mathrm{Tr}(\otimes_{i=1}^m\ketbra{\psi}{\psi}_i\ketbra{e}{e}^{\otimes (lm+|B|)})\nonumber\\
    &=\frac{1 }{2^{(n-k)m+lm+|B|}}\sum_{s_{AB}}\sum_{t_{A}} \mathrm{Tr}(s_{AB}t_A)\nonumber\\
    &=\frac{1}{2^{(|B|-km)}}\sum_{u_{B}}u_{B}
\end{align}
The last two equalities follow from the derivation of Theorem.~\ref{Theorem:04}. with $km$ is the number of logical qubits stored into the $|B|$ quantum registers. Additionally, according to the quantum singleton bound, the best distance $D$, i.e., minimum Pauli error weight that cannot be corrected, that this stabilizer code can have should satisfy:
\begin{equation}
\label{ineq}
    km\leq |B|-2(D-1)
\end{equation}
therefore:
\begin{equation}
    D\leq \frac{|B|-km}{2}+1
\end{equation}
Differently, the single codes sitting in the bulk satisfy 
\begin{equation}
\label{ineq1}
    km\leq ml-2m(d-1)
\end{equation}
by adding the two inequalities \ref{ineq} and \ref{ineq1} we get:
\begin{equation}
    2km\leq |B|-2(D-1)+ml-2m(d-1)
\end{equation}
and hence, $D$ should satisfy:
\begin{equation}
    D\leq \frac{|B|+ml-2m(d-1)-2km}{2}+1
\end{equation}
\end{proof}

Proposition.~\ref{proposition:01} can be regarded as a straightforward result of known entropic inequalities. To better understand this, let each of the $m$ nodes containing $K$ EPR pairs where half of these pairs are encoded into a $l$-qubits quantum code with distance $d$ and the other half is kept as a purifying system $R_i$. We can split the encoded $l$ qubits part into three partitions $A_iF_iC_i$ where $\dim(A_i)=\dim(F_i)=2^{d-1}$ and $\dim(C_i)=2^{l-2(d-1)}$. The global system is thus given by $RAFC=R_i^{\otimes_{i=1}^m}A_i^{\otimes_{i=1}^m}F_i^{\otimes_{i=1}^m}C_i^{\otimes_{i=1}^m}$. In the one hand, it is known that as long as the quantum code has distance $d$, any partition of $d-1$ qubits of the codespace is independent of the encoded state \cite{lidar2013quantum,gottesman1997stabilizer}. Formally, we should have
\begin{align}
    &S(R_iA_i)=S(R_i)+S(A_i)\nonumber\\
    &S(R_iF_i)=S(R_i)+S(F_i)\nonumber\\
    &S(R)=S(AFC)=\sum_{i=1}^mS(R_i)=mk\nonumber\\
\end{align}
Accordignly, we can have
\begin{align}
        &S(R)=S(FC)-S(A)=\sum_{i=1}^mS(R_i)=mk\nonumber\\
        &S(R)=S(AC)-S(F)=\sum_{i=1}^mS(R_i)=mk
\end{align}
Applying the subbaditivity property of the entropy to the last two inequalities we get
\begin{align}
    &mk=S(R)\leq S(C)+S(F)-S(A)\nonumber\\
    &mk=S(R)\leq S(C)+S(A)-S(F)
\end{align}
Therefore
\begin{align}
    mk&\leq S(C)\nonumber\\
    & = \sum_{i=1}^m S(C_i)\nonumber\\
    &\leq m(l-2(d-1))
\end{align}

In the pther hand, we note that the contraction of the codes maps the entanglement in the bulk into the boundary $B$ as $RAFC\rightarrow RB$ with 
\begin{equation}
    S(R)=\min\{mk,|B|\}
\end{equation}
Provided that $|B|>km$ and that the state on the boundary is indeed a quantum code according to Theorem.~\ref{Theorem:04}, and in the same spirit as before we have:
\begin{align}
    S(R)=mk&\leq S(C_B)\nonumber\\
    &\leq \dim(C_B)
\end{align}
where $\dim(C_B)=|B|-2(D-1)$. 

We can easily verify that the example of the Network in Figure.~\ref{fig:05}, with $|B|=9$, $l=5$, $m=3$, $d=3$ and $k=1$ satisfy Proposition~\ref{proposition:01} in the sense that we have the singleton bound $D=3\leq 4$, which is the smallest distance code for $|B|=9$ physical quantum registers and encoding $k=3$ logical qubits.

Clearly, the collective efforts of the bulk nodes -by harnessing the full capacity of the network- to establish a common code for distributed storage is advantageous than encoding separately in a centralized way.  This gives more resilience to errors that might occur during the storage time or during the transmission process, in the meanwhile, it provides more resilience to losses, or to dishonest participation, when the stored information is to be retrieved. Therefore, this shows the importance of choosing appropriate quantum network encoding for providing secure quantum cryptographical schemes for distributively storing quantum data.
Indeed, this would not be possible if the network relies solely on bipartite entanglement distribution and bipartite entanglement swapping. 


\section{Conclusions and Future Work}\label{sec:07}
A new approach for entanglement distribution in quantum networks has been presented. The approach relies on linear quantum coding strategies in relay nodes, which is an inherently multiparty strategy. We have provided the usefulness of this approach in terms of single shot capacities of the quantum network, making it analogous to the classical network coding paradigm. Moreover, a performance analysis has been elaborated for particular ordered network topologies. We have shown that for some networks, the time complexity that the LQC approach achieves, is exponentially reduced --in the depth parameter of the network-- with respect to the time complexity that the usual approach relying on EPR pairs distribution with a central node gives. Similarly, a polynomial advantage --in the connectivity parameter- on the maximal number of memory qubits needed throughout the network is observed. These figure of merits, illustrates the advantage of the LQC paradigm in practical communication networks with high depth and high connectivity. Additionally, a resilience to noise has been illustrated for some specific model, where the probability of success of distributing some target graph states, using LQC, exceeds with significant amount its counterpart in the EPR based approach. As an application we have shown how the proposed protocol when uses quantum error correcting codes as LQC can be useful for distributed quantum storage.  Indeed, all the advantages of LQC were assuming that the distribution network is given. Although, in real case scenarios, the distribution network should, of course, be determined. 

\bibliographystyle{IEEEtran}
\bibliography{sn-bibliography}

\appendices

\section{Proof of Corollary.~3}
\label{app.1}
\begin{figure*}[t]
\normalsize
\begin{equation}
\label{matrix}
H=\begin{bmatrix}[ccccccccc|ccccccccc]
   1&0&0&1&0&1&0&0&1&0&1&1&0&0&0&1&1&0 \\
   1&0&1&1&0&0&1&0&0&0&0&0&0&1&1&0&1&1 \\
   0&1&0&0&1&0&0&1&0&0&0&1&1&0&0&0&0&1 \\
   0&1&0&0&1&0&0&1&0&1&0&0&0&0&1&1&0&0\\
   0&1&1&1&0&0&0&0&0&1&1&1&0&1&1&0&0&0\\
   1&1&1&1&1&1&0&0&0&1&0&0&1&0&0&0&0&0\\
\end{bmatrix}
\end{equation}
\hrulefill
\end{figure*}
The proof of Corollary.~\ref{corollary:03} requires explicit computation of the contractions of the three five qubit codes $[[5,1,3]]$ with EPR pairs. Without any loss of generality, we let the EPR pairs used for 
contraction to be given by 
\begin{equation}
    \ket{e}=\frac{1}{\sqrt{2}}(\ket{00}+\ket{11})
\end{equation}
This is indeed a graph state with stabilizer group given explicitly by the elements:
\begin{equation}
\label{stab-group}
   II,\quad XZ,\quad ZX, \quad -YY
\end{equation}
Since $3$ EPR pairs are needed for contraction withing the bulk nodes, the overall state for contraction is $\ket{e}^{\otimes 3}$ with stabilizer group with elements given by the tensor product of the individual elements of the stabilizer group in \ref{stab-group}. The cardinality of this group is indeed given by $4^3=64$ elements. We don't count all the possibilities of combinations of the elements, but we give an example that illustrates the procedure. If we allow the nodes to be contracted by the stabilizer elements: 
\begin{equation}
    ZX,\quad XZ,\quad XZ 
\end{equation}
on each of the edges of the bulk, then the global stabilizer element of the contraction is given by
\begin{equation}
\label{exemp}
   \underbrace{XZ}\underbrace{ZX}\underbrace{XZ}
\end{equation}
where the under-braces refer to operators that should act on different bulk nodes.
Moreover, the elements of stabilizer group of the code $[[5,1,3]]$ are given by:
\begin{align}
\label{stab-group-5}
    & ZXI\underbrace{XZ},\quad ZZX\underbrace{IX},\quad XZZ\underbrace{XI},\quad IXZ\underbrace{ZX}\nonumber\\
    &-IYX\underbrace{XY},\quad -YYZ\underbrace{IZ},\quad -ZIZ\underbrace{YY},\quad -YIY\underbrace{XX}\nonumber\\
    &-ZYY\underbrace{ZI},\quad -XYI\underbrace{YX},\quad -XXY\underbrace{IY},\quad -IZY\underbrace{YZ}\nonumber\\
    &-YZI\underbrace{ZY},\quad -YXX\underbrace{YI},\quad XIX\underbrace{ZZ},\quad III\underbrace{II}
\end{align}
The underbraces refer to the parts of the codes to be contracted with the neighboring bulk nodes. There are two neighboring bulk nodes to each bulk node, therefore, two physical qubits have to be contracted. 
If we contract \ref{exemp} with the stabilizer group of the three individual codes \ref{stab-group-5} the following nine qubits residual element remains:
\begin{equation}
    ZXIIXZZXI
\end{equation}
In fact, we have $64$ nine qubit residual elements after full contraction. They indeed form a group as we have shown in the proof of Theorem.~\ref{Theorem:04}. In order to prove that this residual group  encodes three qubits, we have to show that it contains six independent generators. Clearely, proving that the six elements 
\begin{align}
\label{genera}
    &XZZXIXZZX, XIXXZZXZZ, IXZZXIIXZ\nonumber\\
    &ZXIIXZZXI, ZYYXZZIII, YXXYXXIII
\end{align}  
belonging to the residual group, are independent, would be sufficient. To do so, we switch to the symplectic representation  of the stabilizer elements \cite{gottesman1997stabilizer,lidar2013quantum}. Accordignly, the above elements might be represented by the matrix $H$ given by \ref{matrix}.
It is clear that the last two rows of the matrix $H$ cannot be written as any possible linear combination over $GF(2)$ of the other four rows. In the meanwhile the last two rows are independent from each other. From the last three entries of the $Z$ part of the matrix $H$ we can notice that the first four rows are linearly independent. Therefore, the rows of the matrix $H$ are linearly independent of each other. Hence, the corresponding stabilizer elements in \ref{genera} are indeed generators of the residual stabilizer group.

Now we should prove that the nine qubit code generated by the stabilizer elements in \ref{genera} has distance $3$. To do this, we should show that any $9$-qubit Pauli operator of weight two should anticommute with at least one of the generators \ref{genera}. Let $g_l$ be a generator of the residual nine qubit stabilizer group, and let $O_B$ be a two qubit Pauli acting on the nine qubits. Form the proof of Theorem.~\ref{Theorem:04} the following is true
\begin{equation}
    g_l=\mathrm{Tr}_A(s_{BA}t_A)
\end{equation}
for some $s_{BA}$, an element of the stabilizer group of the three five qubit codes and $t_A$ an element of the stabilizer group corresponding to the contracting edges. Accordingly, we have

\begin{align}
O_Bg_l&=\mathrm{Tr}_A(O_B\otimes I_A)\mathrm{Tr}_A(s_{BA}t_A)=\mathrm{Tr}_A(O_B\otimes t_As_{BA})         
\end{align}

Since $\{s_{BA}\}$ is a stabilizer group correcting for at most two Pauli errors, then there is at least one element $s_{BA}$ that satisfies

\begin{equation}
    \{O_B\otimes I_A, s_{BA}\}=0
\end{equation}
Hence, there is at least one generator $g_l$ of the residual stabilizer group satisfying 
\begin{equation}
    \{O_B, g_l\}=0
\end{equation}
Therefore, the nine qubit code can correct for any single qubit error. The same argument applies to show that the distance of the code is indeed $D=3$.

\end{document}